\newcommand*\colvec[3][]{
    \begin{pmatrix}\ifx\relax#1\relax\else#1\\\fi#2\\#3\end{pmatrix}
}
\newtheorem{theorem}{Theorem}
\newtheorem{corollary}[theorem]{Corollary}
\newtheorem{lemma}[theorem]{Lemma}
\newtheorem{proposition}[theorem]{Proposition}
\newtheorem{definition}[theorem]{Definition}
\newtheorem{result}[theorem]{Result}
\numberwithin{theorem}{section}
\numberwithin{equation}{section}
\numberwithin{table}{section}
\newtheorem{remark}[theorem]{Remark}
\newcommand{\cA}{\mathcal{A}}
\newcommand{\cB}{\mathcal{B}}
\newcommand{\al}{\alpha}
\newcommand{\be}{\beta}
\newcommand{\de}{\delta}
\newcommand{\De}{\Delta}
\newcommand{\ga}{\gamma}
\newcommand{\ze}{\zeta}
\newcommand{\sig}{\sigma}
\newcommand{\sm}{\setminus}
\newcommand{\es}{\emptyset}
\newcommand{\rank}{\text{rank}}
\newcommand{\lf}{\lfloor}
\newcommand{\rf}{\rfloor}
\newcommand{\F}{\mathbb{F}}
\newcommand{\Fq}{\mathbb{F}_q}
\newcommand{\Gal}{\text{Gal}}
\newcommand{\VB}{\mathcal{VB}}
\newcommand{\Ftn}{\mathbb{F}_{2^n}}
\newcommand{\Fts}{\mathbb{F}_{2^s}}
\long\def\symbolfootnote[#1]#2{\begingroup%
\def\thefootnote{\fnsymbol{footnote}}\footnote[#1]{#2}\endgroup}
\renewcommand*{\backref}[1]{}
\renewcommand*{\backrefalt}[4]{%
	\ifcase #1 (Not cited.)%
	\or        (Cited on page~#2.)%
	\else      (Cited on pages~#2.)%
	\fi}
\let\OLDthebibliography\thebibliography
\renewcommand\thebibliography[1]{
	\OLDthebibliography{#1}
	\setlength{\parskip}{0pt}
	\setlength{\itemsep}{0pt plus 0.3ex}
}
\begin{document}

\title{Vanishing Flats: A Combinatorial Viewpoint on the Planarity of Functions and Their Application}
\author{Shuxing Li, Wilfried Meidl, Alexandr Polujan, Alexander Pott, Constanza Riera, and Pantelimon St\u anic\u a
\thanks{S. Li is with Institute of Algebra and Geometry, Faculty of Mathematics, Otto von Guericke University Magdeburg, Universit\"atsplatz 2, 39106 Magdeburg, Germany (email: shuxing\_li@sfu.ca).}
\thanks{W. Meidl is with Johann Radon Institute for Computational and Applied Mathematics, Austrian Academy of Sciences, Altenbergerstrasse 69, 4040 Linz, Austria (email: meidlwilfried@gmail.com).}
\thanks{A. Polujan is with Institute of Algebra and Geometry, Faculty of Mathematics, Otto von Guericke University Magdeburg, Universit\"atsplatz 2, 39106 Magdeburg, Germany (email: alexandr.polujan@ovgu.de).}
\thanks{A. Pott is with Institute of Algebra and Geometry, Faculty of Mathematics, Otto von Guericke University Magdeburg, Universit\"atsplatz 2, 39106 Magdeburg, Germany (email: alexander.pott@ovgu.de).}
\thanks{C. Riera is with Department of Computer Science, Electrical Engineering and Mathematical Sciences, Western Norway University of Applied Sciences, 5020 Bergen, Norway (email: csr@hvl.no).}
\thanks{P. St\u anic\u a is with Department of Applied Mathematics, Naval Postgraduate School, Monterey, CA 93943--5216 (email: pstanica@nps.edu).}
}

\maketitle

\begin{abstract}
For a function $f$ from $\F_2^n$ to $\F_2^n$, the planarity of $f$ is usually measured by its differential uniformity and differential spectrum. In this paper, we propose the concept of vanishing flats, which supplies a combinatorial viewpoint on the planarity. First, the number of vanishing flats of $f$ can be regarded as a measure of the distance between $f$ and the set of almost perfect nonlinear functions. In some cases, the number of vanishing flats serves as an ``intermediate'' concept between differential uniformity and differential spectrum, which contains more information than differential uniformity, however less than the differential spectrum. Secondly, the set of vanishing flats forms a combinatorial configuration called partial quadruple system, since it conveys a detailed structural information about $f$. We initiate this study by considering the number of vanishing flats and the partial quadruple systems associated with monomials and Dembowski-Ostrom polynomials. In addition, we present an application of vanishing flats to the partition of a vector space into disjoint equidimensional affine spaces. We conclude the paper with several further questions and challenges.
\end{abstract}

\begin{IEEEkeywords}
Almost perfect nonlinear function, Dembowski-Ostrom polynomial, differential uniformity, differential spectrum, equivalence, finite field, monomial, partial quadruple system, planarity, vanishing flat, vector space partition.
\end{IEEEkeywords}

\section{Introduction}

In this paper, we propose a combinatorial viewpoint to assess the planarity of functions from $\F_2^n$ to $\F_2^n$.  A function $f: \F_2^n \rightarrow \F_2^n$ achieves the highest planarity if and only if for each nonzero $a \in \F_2^n$, the function $f(x+a)-f(x)$ is as balanced as possible. A function from $\F_2^n$ to $\F_2^n$ with the highest planarity is called almost perfect nonlinear and will be formally defined later.  Keeping in mind that planarity is defined for functions over vector spaces, we always adopt the finite field model $\Ftn$ to represent the vector space $\F_2^n$, as each mapping defined over $\Ftn$ can be described by a polynomial. In fact, the univariate polynomial representation of functions over $\F_{2^n}$ may better serve our purpose and will be used throughout.

We first recall some basic concepts and notation. For a polynomial $f$ over $\Ftn$ and $a \in \Ftn^*$, the \emph{first-order derivative of $f$ at $a$} is the polynomial $f(x+a)+f(x)$. Given a polynomial $f$ and field elements $a \in \Ftn^*$, $b \in \Ftn$, we define
$$
\de_f(a,b)=|\{ x \in \Ftn \mid f(x+a)+f(x)=b \}|.
$$
Clearly, $\de_f(a,b)$ must be even (since, if $x_0$ is a solution of the above equation, so is $x_0+a$). The \emph{differential uniformity} of $f$ is defined as
$$
\de_f=\max_{a \in \Ftn^*, b \in \Ftn} \de_f(a,b)
$$
and the \emph{differential spectrum} of $f$ is the sequence $(\ell_{f,0},\ell_{f,2},\ldots,\ell_{f,\de_f})$, where $\ell_{f,2i}$ is the frequency of $2i$ in the multiset $[\de_f(a,b) \mid a \in \Ftn^*, b \in \Ftn]$. For the sake of convenience, we use $\ell_{2i}$ to represent $\ell_{f,2i}$, whenever the function $f$ is clear from the context. The polynomial $f$ is called {\em almost perfect nonlinear} (APN), if its differential uniformity is $\de_f=2$, which is the smallest possible. Equivalently, $f$ is APN if and only if the first-order derivative $f(x+a)+f(x)$ is a $2$-to-$1$ mapping for each $a \in \Ftn^*$. For $a \in \Ftn^*$, we define the \emph{differential uniformity of $f$ along the direction $a$} as
\[
\de_f(a)=\max_{b \in \Ftn} \de_f(a,b).
\]
Moreover, for $a \in \Ftn^*$, we define
\[
E_f(a)=\left\{ f(x+a)+f(x) \mid x \in \Ftn \right\}
\]
to be the image set of the first-order derivative of $f$ along the direction $a$. Note that $\de_f(a) \ge 2$ for each $a \in \Ftn^*$ and $f$ is APN if and only if $\de_f(a)=2$ for each $a \in \Ftn^*$. The recent paper~\cite{BKKRS} calls a function $f$ \emph{partially $a$-APN}, if $\de_f(a)=2$ at some $a \in \Ftn^*$. Surely, $f$ is partially $a$-APN if and only if $|E_f(a)|=2^{n-1}$.


There has been intensive research about APN functions, and  we refer to~\cite{BN15} and \cite{P}, for comprehensive surveys and extensive references on the subject. The fact that most known APN polynomials are quadratic and there is only one known class of APN permutations over $\Ftn$ with $n=6$, further motivates our research into the ``second best'' polynomials, which are not APN, but are closest to APN functions in some sense (this was one of the reasons for the introduction of the previously mentioned partially APN functions). One natural research venue  is  finding polynomials with differential uniformity $4$, which is the second smallest possible. Along this direction, a considerable amount of research dealt with polynomials having differential uniformity $4$, also satisfying some other  properties (see~\cite{BCC,BL,BTT,N,QTTL},  for example). The main purpose of this paper is to propose yet another research venue, namely the concept of vanishing flats, which offers a combinatorial viewpoint to understand the planarity of functions. The number of vanishing flats serves as a measure of the planarity of functions, or equivalently, measure the distance from a function to the set of APN functions. The set of vanishing flats forms a configuration called partial quadruple system, which is closely related to the classical Steiner quadruple system. The partial quadruple system contains detailed structural information of $f$. For instance, one can easily determine the directions along which a function $f$ is partially APN, by reading through the partial quadruple system of $f$.

The rest of the paper is organized as follows. In Section~\ref{sec2}, we introduce the vanishing flats and the partial quadruple system. Both the number of vanishing flats and the partial quadruple system are invariants under the CCZ-equivalence~\cite{CCZ}, as we shall show. Some relations among the differential uniformity, the number of vanishing flats and differential spectrum are discussed. With respect to the partial quadruple systems, some combinatorial implications of differential uniformity and differential spectrum are revealed. In Section~\ref{sec3}, we consider the partial quadruple system and the number of vanishing flats associated with monomials and Dembowski-Ostrom polynomials. We determine the partial quadruple systems of the inverse and the Gold functions. For general Dembowski-Ostrom polynomials, a formula for the number of vanishing flats is presented. In addition, the number of vanishing flats of a few more classes of monomials follows directly from their differential spectra. Section~\ref{sec4} presents an application of vanishing flats to a problem of partitioning vector spaces over finite fields into a collection of disjoint equidimensional affine subspaces, in the spirit of Baum and Neuwirth~\cite{BN75}. By extending the notion of vanishing flats, we propose a new approach to generate this type of partition. In Section~\ref{sec5}, we give some concluding remarks and raise some future research problems.

\section{Vanishing flats and partial quadruple systems associated with Boolean functions}\label{sec2}

In this section, we propose the concept of vanishing flats and a configuration named partial quadruple system associated with non-APN functions.
For a polynomial $f: \Ftn \rightarrow \Ftn$, we define the set of \emph{critical directions} of $f$ as
\begin{align*}
D_f&=\{ a \in \Ftn^* \mid \mbox{$f(x+a)+f(x)$ is not $2$-to-$1$} \}\\
   &=\{a \in \Ftn^* \mid \de_f(a) \ge 4\}.
\end{align*}
Clearly, $f$ is APN if and only if $D_f=\es$, namely, there exists no critical direction of $f$, or, the first-order derivative is as much balanced as possible along each direction. Indeed, in order to verify the APN property, elegant results in \cite{BD,CK} say that checking about half of all the directions would suffice. More precisely, if $\Ftn^* \sm D_f$ contains an $(n-1)$-dimensional vector space over $\F_2$ minus the zero vector, then this much weaker condition ensures that $f$ is APN \cite[Theorem 2]{CK} (this result was extended in~\cite[Theorem 6.2]{CP} to show that if $\Ftn^* \sm D_f$ contains a $t$-dimensional vector space over $\F_2$ minus the zero vector then the differential uniformity is $\leq 2^{n-t}$). Surely, for a non-APN function $f$, the set $D_f$ indicates the critical directions where the $2$-to-$1$ property fails. For instance, for $a \in \Ftn^*$, a function $f$ is partially $a$-APN if and only if $a \not\in D_f$. Consequently, we consider the set $D_f$ to be a crucial ingredient conveying much information related to the planarity of $f$. In order to understand $D_f$, we trace back to its source and arrive at a well-known alternative definition of APN functions.


For $n \ge 2$, define the set of all $2$-dimensional flats in $\Ftn$ as
\begin{align*}
\cB_n=\{ \{x_1,x_2,x_3, & x_4\}  \mid \mbox{$x_1+x_2+x_3+x_4=0$ } \\
                             & \mbox{and $x_1,x_2,x_3,x_4 \in\Ftn$ are distinct} \}.
\end{align*}
Conventionally, each subset in $\cB_n$ is called a \emph{block}. The classical \emph{Steiner quadruple system} is a pair $(\Ftn,\cB_n)$, so that each $3$-subset of $\Ftn$ is contained in exactly one block of $\cB_n$. A function $f: \Ftn \rightarrow \Ftn$ is APN if and only if for each $\{x_1,x_2,x_3,x_4\} \in \cB_n$,
$$
f(x_1)+f(x_2)+f(x_3)+f(x_4) \ne 0.
$$
Namely, the summation of $f$ over each $2$-dimensional flat is non-vanishing. For a function $f: \Ftn \rightarrow \Ftn$, define the set of \emph{vanishing flats} with respect to $f$ as
\begin{align*}
\VB_{n,f}=\{ \{x_1,x_2,& x_3, x_4\} \in \cB_n \mid \\
                           &  f(x_1)+f(x_2)+f(x_3)+f(x_4)=0 \}.
\end{align*}
Note that each vanishing flat $\{x_1,x_2,x_3,x_4\}$ in $\VB_{n,f}$, gives rise to three distinct critical directions $a_1=x_1+x_2$, $a_2=x_1+x_3$ and $a_3=x_1+x_4$, such that $f(x+a_i)+f(x)$ is not $2$-to-$1$ for each $1 \le i \le 3$. Therefore, the set of critical directions $D_f$ follows immediately from $\VB_{n,f}$. Since $\VB_{n,f} \subset \cB_n$, we call the set system $(\Ftn,\VB_{n,f})$ a \emph{partial quadruple system}. Indeed, a partial quadruple system is an instance of the much more general class of configurations named \emph{packings}. For comprehensive surveys about general Steiner systems and packings, we refer to \cite{CM} and \cite{SWY}, respectively.

For an APN function $f$, the associated partial quadruple system is degenerate as the block set $\VB_{n,f}$ is empty. Now, we proceed to justify that the partial quadruple system is a proper measure of the planarity of non-APN functions.

Let $f$ and $g$ be two polynomials from $\F_{2^n}$ to $\F_{2^n}$. They are {\em CCZ-equivalent} \cite{CCZ} if there exists an affine permutation $A$ on $\F_{2}^n \times \F_{2}^n$ such that
\begin{equation}\label{eqn-CCZ}
\left\{\colvec{x}{g(x)} \bigm| x\in\F_{2}^n \right\}=\left\{A \colvec{x}{f(x)} \bigm| x\in\F_{2}^n \right\},
\end{equation}
where by choosing a basis of $\F_{2^n}$ over $\F_2$, the polynomials $f$ and $g$ are regarded as two mappings from $\F_2^n$ to $\F_2^n$. As it is well-known, the CCZ-equivalence preserves the differential uniformity, and therefore the APN property~\cite[Proposition 2]{BCP}. Recall that two partial quadruple systems are \emph{isomorphic} if there exists a bijective mapping between their point sets, which also induces a bijective mapping between their block sets. Next, we will show that the CCZ-equivalence also preserves partial quadruple systems up to isomorphism. Note that we will use the notation $\cong$ to represent two isomorphic partial quadruple systems.

\begin{theorem}
\label{thm-CCZinvariant}
Let $f$ and $g$ be two CCZ-equivalent functions from $\F_{2^n}$ to $\F_{2^n}$, in which
\begin{align*}
&\left\{\colvec{x}{g(x)} \bigm|  x\in\F_{2}^n \right\} \\
=&\left\{ \begin{pmatrix}
A_{11} & A_{12}\\
A_{21} & A_{22}
\end{pmatrix} \colvec{x}{f(x)}+\colvec{u}{v} \bigm|  x\in\F_{2}^n \right\}.
\end{align*}
Then, $\{x_1,x_2,x_3,x_4\} \in \VB_{n,f}$ if and only if $\{y_1,y_2,y_3,y_4\} \in \VB_{n,g}$, where $y_i=A_{11} x_i+A_{12} f(x_i)+u$.
Consequently, $(\Ftn,\VB_{n,f}) \cong (\Ftn,\VB_{n,g})$ and the number of vanishing flats is an invariant under the CCZ-equivalence.
\end{theorem}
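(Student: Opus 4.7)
The plan is to exploit the affine permutation directly. Let $\Psi$ denote the affine permutation on $\F_2^n \times \F_2^n$ with linear part $\begin{pmatrix} A_{11} & A_{12} \\ A_{21} & A_{22}\end{pmatrix}$ and translation $\colvec{u}{v}$, and set $\phi(x) = A_{11}x + A_{12}f(x) + u$, so the $y_i$ in the statement is exactly $\phi(x_i)$. My first step is to observe that $\phi$ is a bijection on $\Ftn$: if $\phi(x_1) = \phi(x_2)$ for $x_1 \ne x_2$, then the image of the graph of $f$ under $\Psi$ would contain two points with the same first coordinate, contradicting that this image is the graph of the function $g$. Reading off the second coordinate of $\Psi\colvec{x}{f(x)}$ then yields the key identity $g(\phi(x)) = A_{21}x + A_{22}f(x) + v$, which lets me rewrite sums of $g$-values over the $y_i$ as sums over the $x_i$.

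Next, I would verify the forward implication by a direct expansion. Given $\{x_1,x_2,x_3,x_4\} \in \VB_{n,f}$ and $y_i = \phi(x_i)$, distinctness of the $y_i$ is immediate from bijectivity of $\phi$, and in characteristic $2$ the four copies of the constants $u$ and $v$ cancel, so
\[
\sum_i y_i = A_{11}\sum_i x_i + A_{12}\sum_i f(x_i), \qquad \sum_i g(y_i) = A_{21}\sum_i x_i + A_{22}\sum_i f(x_i),
\]
and both right-hand sides vanish by hypothesis. For the converse I would introduce $s = \sum_i x_i$ and $t = \sum_i f(x_i)$ and rewrite the vanishing of $\sum_i y_i$ and of $\sum_i g(y_i)$ as the single linear system
\[
\begin{pmatrix} A_{11} & A_{12} \\ A_{21} & A_{22}\end{pmatrix} \colvec{s}{t} = \colvec{0}{0}.
\]
Invertibility of the linear part of $\Psi$ then forces $s = t = 0$, and injectivity of $\phi$ transports distinctness of the $y_i$ back to the $x_i$.

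Having established this blockwise equivalence in both directions, I would conclude that $\phi$ induces a bijection $\VB_{n,f} \to \VB_{n,g}$ which, together with the point bijection $\phi:\Ftn \to \Ftn$, is the claimed isomorphism of partial quadruple systems; equality of the cardinalities is then immediate. I do not anticipate a serious obstacle. The one step requiring care is the converse direction, where one must observe that the simultaneous vanishing of $\sum_i y_i$ and $\sum_i g(y_i)$ is equivalent, through invertibility of the full $2n \times 2n$ block matrix, to the simultaneous vanishing of $\sum_i x_i$ and $\sum_i f(x_i)$; invertibility of $A_{11}$ alone would not suffice, and this is precisely the feature that CCZ-equivalence (as opposed to a mere change of variable on the domain) supplies.
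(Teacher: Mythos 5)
Your argument is correct and essentially matches the paper's: both define the map $T(x)=A_{11}x+A_{12}f(x)+u$, observe that it permutes $\Ftn$ because the image of the graph of $f$ is the graph of $g$, extract the identity $g(T(x))=A_{21}x+A_{22}f(x)+v$, and verify the forward direction by direct expansion with the constants $u,v$ cancelling in characteristic $2$. The only variation is in the converse, where the paper simply applies $T^{-1}$ and ``reverses the argument,'' while you instead solve the linear system given by the invertible block matrix to force $\sum_i x_i=\sum_i f(x_i)=0$ --- an equivalent and slightly more explicit justification (note only that distinctness of the $x_i$ follows from $T$ being well-defined rather than from its injectivity).
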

\begin{proof}
Suppose $\{x_1,x_2,x_3,x_4\} \in \VB_{n,f}$. Then
\begin{align*}
x_1+x_2+x_3+x_4&=0, \\
f(x_1)+f(x_2)+f(x_3)+f(x_4)&=0.
\end{align*}
Let $T$ be a mapping from $\F_2^n$ to $\F_2^n$, such that $T(x)=A_{11}x+A_{12}f(x)+u$. By the defining relation of the CCZ-equivalence, $T$ induces a permutation on $\Ftn$. Moreover, we let $y_i=T(x_i)$ and $g(y_i)=A_{21} x_i+A_{22} f(x_i)+v$. A direct computation shows that
\begin{align*}
y_1+y_2+y_3+y_4&=0, \\
g(y_1)+g(y_2)+g(y_3)+g(y_4)&=0.
\end{align*}
Therefore, $\{y_1,y_2,y_3,y_4\} \in \VB_{n,g}$ and $T$ induces an injective mapping from $\VB_{n,f}$ to $\VB_{n,g}$. Reversing the above argument, one can show that by applying the inverse of $T$ to a vanishing flat $\{y_1,y_2,y_3,y_4\} \in \VB_{n,g}$, we obtain a vanishing flat $\{x_1,x_2,x_3,x_4\} \in \VB_{n,f}$. Hence, $T$ induces a bijection between $\VB_{n,f}$ and $\VB_{n,g}$, and therefore $(\Ftn,\VB_{n,f}) \cong (\Ftn,\VB_{n,g})$.
\end{proof}

For a polynomial $f$ over $\Ftn$ and $\sig \in \Gal(\Ftn/\F_2)$ (the Galois group of $\Ftn$ over $\F_2$), we define  $\sig f$ to be a function satisfying  $\sig f(a)=\sig(f(a))$, for each $a \in \Ftn$. It is easy to see that both applying a Galois automorphism of the field extension $\Ftn/\F_2$ and adding an affine function keep the partial quadruple systems invariant. In addition, if $f$ is a permutation, then the partial quadruple systems associated with $f$ and $f^{-1}$ are isomorphic.

\begin{remark}\label{rem-transformation}
Let $f$ be a polynomial over $\Ftn$. Then we have the following$:$
\begin{itemize}
\item[$(1)$] For each $\sig \in \Gal(\Ftn/\F_2)$, we have $\VB_{n,f}=\VB_{n,\sig f}$.
\item[$(2)$] For each affine polynomial $A$ over $\Ftn$, we have $\VB_{n,f}=\VB_{n,f+A}$.
\item[$(3)$] If $f$ is a permutation, then $(\Ftn,\VB_{n,f}) \cong (\Ftn,\VB_{n,f^{-1}})$.
\end{itemize}
\end{remark}


A particularly interesting quantity associated with $(\Ftn,\VB_{n,f})$ is the size of the block set $\VB_{n,f}$, or equivalently, the number of vanishing flats of $f$. Note that the number of vanishing flats is $0$ if and only if $f$ is APN. In this sense, the number of vanishing flats measures the distance between $f$ and the set of APN functions. In fact, the size of $\VB_{n,f}$ follows from the differential spectrum of $f$.

\begin{theorem}
\label{thm-diffspecVB}
For a function $f: \Ftn \rightarrow \Ftn$, we have $\displaystyle \left|\VB_{n,f}\right|=\frac{1}{3}\sum_{a \in \Ftn^*, b \in \Ftn} \binom{\de_f(a,b)/2}{2}$.
\end{theorem}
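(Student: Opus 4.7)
The plan is to establish the identity by a double-counting argument, counting pairs $(F, P)$ where $F \in \VB_{n,f}$ is a vanishing flat and $P$ is a partition of $F$ into two $2$-element subsets.

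First, I would fix $a \in \Ftn^*$ and $b \in \Ftn$ and examine the solution set $S_{a,b} = \{x \in \Ftn \mid f(x+a)+f(x) = b\}$, which has size $\de_f(a,b)$. Since $x \in S_{a,b}$ implies $x+a \in S_{a,b}$, this set decomposes into $\de_f(a,b)/2$ disjoint pairs of the form $\{x, x+a\}$. Choosing any two distinct such pairs, say $\{x, x+a\}$ and $\{y, y+a\}$ with $y \notin \{x, x+a\}$, produces a $4$-set $\{x, x+a, y, y+a\}$ whose elements sum to $0$ and whose images under $f$ sum to $b+b=0$; so this $4$-set lies in $\VB_{n,f}$, together with the distinguished partition $\{\{x, x+a\}, \{y, y+a\}\}$. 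Thus the number of such (ordered-by-$(a,b)$, unordered-pair-of-pairs) configurations is exactly $\binom{\de_f(a,b)/2}{2}$, and summing over all $a \in \Ftn^*$ and $b \in \Ftn$ gives the right-hand side of the identity (without the factor $1/3$).

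Next, I would show that the above procedure is essentially a bijection between the configurations counted above and pairs $(F, P)$ as described. Given any vanishing flat $F = \{x_1,x_2,x_3,x_4\}$ and any partition $P = \{\{x_i,x_j\},\{x_k,x_\ell\}\}$ of $F$ into two pairs, the direction $a := x_i+x_j$ satisfies $a = x_k+x_\ell$ (because $x_1+x_2+x_3+x_4 = 0$), and the value $b := f(x_i)+f(x_j) = f(x_k)+f(x_\ell)$ (because $\sum_t f(x_t) = 0$). So $P$ is recovered from a unique pair of solution-pairs in $S_{a,b}$. Hence each configuration counted above corresponds to a unique pair $(F,P)$.

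Finally, I would count the partitions of a fixed $F \in \VB_{n,f}$: a $4$-set has exactly three partitions into two $2$-subsets, and these three partitions yield three distinct directions $a$ (since in characteristic $2$, $x_1+x_2$, $x_1+x_3$, $x_1+x_4$ are pairwise distinct nonzero elements). Thus each $F$ contributes $3$ to the sum, and dividing by $3$ yields $|\VB_{n,f}|$. The main obstacle, modest as it is, is just checking that every vanishing flat genuinely produces three valid contributions and that nothing degenerates — in particular, that the pairs $\{x,x+a\}$ and $\{y,y+a\}$ are automatically disjoint when $y \notin \{x,x+a\}$, so that the resulting $4$-set actually has four distinct elements and qualifies as a block of $\cB_n$.
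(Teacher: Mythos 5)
Your proof is correct and follows essentially the same double-counting argument as the paper: each pair $(a,b)$ contributes $\binom{\de_f(a,b)/2}{2}$ vanishing flats, and each vanishing flat is counted exactly three times, once for each of its three pair-partitions (equivalently, its three directions $x_1+x_2$, $x_1+x_3$, $x_1+x_4$). Your version is merely a bit more explicit about the bijection between configurations and (flat, partition) pairs and about the non-degeneracy checks, which the paper leaves implicit.
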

\begin{proof}
By definition, each pair $(a,b)$ with $\de_f(a,b) \le 2$ is irrelevant to vanishing flats. Thus, we only need to consider the pairs $(a,b) \in \Ftn^* \times \Ftn$ with $\de_f(a,b) \ge 4$. For each such pair, there exist distinct $x_1, x_2 \in \Ftn$, such that $f(x_1+a)+f(x_1)=f(x_2+a)+f(x_2)=b$ and therefore $\{x_1,x_1+a,x_2,x_2+a\}$ forms a vanishing flat. In total, the number of vanishing flats derived from these $(a,b)$ pairs is equal to $\sum_{a \in \Ftn^*, b \in \Ftn} \binom{\de_f(a,b)/2}{2}$. On the other hand, we claim that each vanishing flat can be derived exactly three times by considering the $(a,b)$ pairs. Indeed, for distinct $a_1$, $a_2$, $a_3$ with $a_1+a_2+a_3=0$, a vanishing flat $\{x,x+a_1,x+a_2,x+a_3\}$ satisfying
\begin{align*}
f(x+a_1)+f(x)&=f(x+a_2)+f(x+a_3)=b_1, \\
f(x+a_2)+f(x)&=f(x+a_1)+f(x+a_3)=b_2, \\
f(x+a_3)+f(x)&=f(x+a_1)+f(x+a_2)=b_3,
\end{align*}
occurs exactly three times when we consider the pairs $(a_i,b_i)$, $1 \le i \le 3$. Consequently, $|\VB_{n,f}|=\frac{1}{3}\sum_{a \in \Ftn^*, b \in \Ftn} \binom{\de_f(a,b)/2}{2}$.
\end{proof}

Observe that both the differential uniformity and the number of vanishing flats carry partial information of the differential spectrum. In general, they offer different viewpoints to assess the planarity of functions. On the other hand, when a function has two-valued differential spectrum, these three notions are equivalent. We recall that a function $f$ over $\Ftn$ with two-valued differential spectrum has the property that $\{ \de_f(a,b) \mid a \in \Ftn^*, b \in \Ftn \}=\{0,2^s\}$ for some positive integer $1 \le s \le n$ \cite[Lemma 3]{BCC}. In the following result,  for the sake of simplicity, we only concentrate on the nonzero entries in the differential spectrum.

\begin{proposition}\label{prop-equivalence}
Let $f$ be a polynomial over $\Ftn$ with two-valued differential spectrum $\{0,2^s\}$, where $1 \le s \le n$. Then the following are equivalent$:$
\begin{itemize}
\item[$(1)$] $\de_f=2^s$.
\item[$(2)$] $|\VB_{n,f}|=\frac{2^{n-2}(2^n-1)(2^{s-1}-1)}{3}$.
\item[$(3)$] $\ell_{f,0}=(2^n-2^{n-s})(2^n-1)$ and $\ell_{f,2^s}=2^{n-s}(2^n-1)$.
\end{itemize}
\end{proposition}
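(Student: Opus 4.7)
The plan is to observe that under the two-valued hypothesis, each of (1), (2), (3) is essentially a reformulation of a single counting statement, so the equivalence reduces to elementary bookkeeping. The two key ingredients are the trivial first-moment identity $\sum_{b\in\Ftn}\de_f(a,b)=2^n$ for every $a\in\Ftn^*$, and Theorem~\ref{thm-diffspecVB}.

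First, I would sum the first-moment identity over $a\in\Ftn^*$ to get
\[
\sum_{a\in\Ftn^*,\,b\in\Ftn}\de_f(a,b)=2^n(2^n-1).
\]
Since the spectrum takes only the values $0$ and $2^s$, the left side collapses to $2^s\ell_{f,2^s}$, forcing $\ell_{f,2^s}=2^{n-s}(2^n-1)$, and then
\[
\ell_{f,0}=2^n(2^n-1)-\ell_{f,2^s}=(2^n-2^{n-s})(2^n-1),
\]
by subtracting from the total number $2^n(2^n-1)$ of pairs $(a,b)\in\Ftn^*\times\Ftn$. This yields (3) directly, and since $\ell_{f,2^s}>0$, statement (1) follows, because the value $2^s$ is actually attained.

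Next I would apply Theorem~\ref{thm-diffspecVB}. In the two-valued case only pairs with $\de_f(a,b)=2^s$ contribute, each contributing $\binom{2^{s-1}}{2}$ vanishing flats (counted with multiplicity three), so
\[
|\VB_{n,f}|=\frac{1}{3}\,\ell_{f,2^s}\binom{2^{s-1}}{2}=\frac{1}{3}\cdot 2^{n-s}(2^n-1)\cdot\frac{2^{s-1}(2^{s-1}-1)}{2},
\]
which simplifies to $\frac{2^{n-2}(2^n-1)(2^{s-1}-1)}{3}$, giving (2).

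Finally, to close the equivalence loop, I would note that each of (1), (2), (3) determines the single unknown $\ell_{f,2^s}$ uniquely (via the value $2^{n-s}(2^n-1)$, via solving the quadratic expression for $|\VB_{n,f}|$, and directly, respectively), and that this common value, together with the two-valued hypothesis, forces the other two statements. There is no real obstacle here — the only point requiring mild care is the algebraic simplification in the vanishing flats formula and noticing that $\ell_{f,2^s}>0$ is what promotes the hypothesis into statement (1).
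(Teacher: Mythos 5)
Your proof is correct and takes essentially the same route as the paper, whose proof is just the one-line remark that the result ``follows easily from the definitions and Theorem~\ref{thm-diffspecVB}''. Your write-up supplies exactly the bookkeeping that remark leaves implicit: the first-moment identity $\sum_{b\in\Ftn}\de_f(a,b)=2^n$ forces $\ell_{f,2^s}=2^{n-s}(2^n-1)$ and $\ell_{f,0}=(2^n-2^{n-s})(2^n-1)$, and substituting into Theorem~\ref{thm-diffspecVB} gives $|\VB_{n,f}|=\frac{2^{n-2}(2^n-1)(2^{s-1}-1)}{3}$, with the closing observation that each of (1)--(3) pins down the same spectral data.
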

\begin{proof}
The proof follows easily from the definitions and Theorem~\ref{thm-diffspecVB}.
\end{proof}

For a function $f$ over $\Ftn$, the number of its vanishing flats tells the number of blocks in the partial quadruple system $(\Ftn,\VB_{n,f})$. As one may expect, the differential spectrum of $f$ supplies more detailed structural information of $(\Ftn,\VB_{n,f})$. Indeed, for $x \in \Ftn$ and $a \in \Ftn^*$ such that $f(x+a)+f(x)=b$, the value of $\de_f(a,b)$ renders the number of vanishing flats containing the pair $x$ and $x+a$.

\begin{theorem}\label{thm-diffspecPQS}
Let $f$ be a function over $\Ftn$.
\begin{itemize}
\item[$(1)$] For $x \in \Ftn$ and distinct $a_1, a_2 \in \Ftn^*$, if the three elements $x$, $x+a_1$ and $x+a_2$ belong to a vanishing flat, then $\de_f(a) \ge 4$ for each $a \in \{ a_1, a_2, a_1+a_2 \}$.
\item[$(2)$] For $x \in \Ftn$ and $a \in \Ftn^*$, such that $f(x+a)+f(x)=b$, the two elements $x$ and $x+a$ belong to $\frac{\de_f(a,b)}{2}-1$ vanishing flats.
\end{itemize}
\end{theorem}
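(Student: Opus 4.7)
The plan is to exploit the simple algebraic observation that a vanishing flat containing two (or three) prescribed elements is almost forced by the affine relation summing to zero, and then to read off everything from the definition of $\de_f(a,b)$.

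For (1), I would observe that three of the four points of a vanishing flat already pin down the fourth: given $x$, $x+a_1$, $x+a_2$, the missing point must be $x+a_1+a_2$ since the four points of any block sum to zero. Rearranging the relation $f(x)+f(x+a_1)+f(x+a_2)+f(x+a_1+a_2)=0$ would then reveal that along direction $a_1$ the derivative $f(y+a_1)+f(y)$ takes the common value $b_1:=f(x+a_1)+f(x)$ at both $y=x$ and $y=x+a_2$. Combined with the ``partner'' solutions $y=x+a_1$ and $y=x+a_1+a_2$ (automatic, since $y_0$ being a solution implies $y_0+a_1$ is also a solution), this produces four pairwise distinct preimages, so $\de_f(a_1,b_1)\ge 4$ and hence $\de_f(a_1)\ge 4$. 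The same argument with $a_1,a_2,a_1+a_2$ permuted handles the other two directions.

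For (2), I would fix $x$ and $a$ with $f(x+a)+f(x)=b$ and count vanishing flats through the pair $\{x,x+a\}$ directly. Any such flat has the form $\{x,x+a,z,w\}$ with $x+(x+a)+z+w=0$, i.e.\ $w=z+a$, and the vanishing condition on $f$-values reduces to $f(z+a)+f(z)=b$. Hence flats through $\{x,x+a\}$ correspond bijectively to $a$-translation pairs $\{z,z+a\}$ of solutions to $f(y+a)+f(y)=b$ other than $\{x,x+a\}$ itself. Since all $\de_f(a,b)$ solutions group into such pairs, the number of pairs is $\de_f(a,b)/2$, and removing the pair we started with leaves $\de_f(a,b)/2-1$.

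I do not anticipate any deep obstacle; the main thing to watch is non-degeneracy, i.e.\ making sure the four points really form a $2$-dimensional flat rather than collapsing. In (1), the hypothesis that $a_1$ and $a_2$ are distinct and nonzero forces $a_1+a_2$ to be nonzero and distinct from both, so $x,x+a_1,x+a_2,x+a_1+a_2$ are automatically pairwise distinct. In (2), the four points $x,x+a,z,z+a$ are distinct precisely when $z\notin\{x,x+a\}$, which is exactly the restriction built into the count. Beyond tracking these distinctness conditions, nothing more than the definitions is required.
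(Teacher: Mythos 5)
Your proposal is correct and follows essentially the same route as the paper: part (2) is exactly the paper's argument of grouping the $\de_f(a,b)$ solutions of $f(y+a)+f(y)=b$ into $\de_f(a,b)/2$ translation pairs $\{z,z+a\}$ and discarding the pair $\{x,x+a\}$ itself. The paper omits the proof of (1) as easy, and your argument---forcing the fourth point to be $x+a_1+a_2$ and reading off four distinct solutions of $f(y+a_1)+f(y)=b_1$---is precisely the intended one, with the distinctness checks properly handled.
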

\begin{proof}
The proof of (1) is easy and hence omitted. For (2), note that there are $\frac{\de_f(a,b)}{2}$ pairs of $x_i$ and $x_i+a$, where $1 \le i \le \frac{\de_f(a,b)}{2}$ and $x_1=x$, such that $f(x_i+a)+f(x_i)=b$. Consequently, there are exactly $\frac{\de_f(a,b)}{2}-1$ vanishing flats containing $x$ and $x+a$.
\end{proof}

\begin{remark}
We note that the converse of Theorem~\ref{thm-diffspecPQS}(1) is not true. For instance, consider the inverse function $f(x)=x^{-1}$ over $\F_{2^n}$ with $n$ being even. According to Proposition~\ref{prop-mono} below, $\de_f(a) \ge 4$ for each $a \in \Ftn^*$. On the other hand, by Theorem~\ref{thm-PQSinv}, for distinct $a_1,a_2 \in \Ftn$, the three elements $0$, $a_1$ and $a_2$ belong to the same vanishing flat if and only if $\frac{a_1}{a_2}$ is a third root of unity in $\F_{2^n}$.
\end{remark}

\section{Vanishing Flats and Partial Quadruple Systems of Monomials and Dembowski-Ostrom Polynomials}\label{sec3}

In this section, we investigate the vanishing flats and the partial quadruple systems associated with monomials and Dembowski-Ostrom polynomials.

\subsection{Vanishing Flats and Partial Quadruple Systems of Monomials}

We first consider the vanishing flats and the partial quadruple systems derived from monomials. For the sake of convenience, we write $\de_d:=\de_{x^d}$, $\de_d(a,b):=\de_{x^d}(a,b)$, $\de_d(a):=\de_{x^d}(a)$, $D_d:=D_{x^d}$ and $\VB_{n,d}:=\VB_{n,x^d}$. A polynomial $f$ being monomial ensures that the first-order derivative $f(x+a)+f(x)$ behaves uniformly at each $a \in \Ftn^*$.

\begin{proposition}
\label{prop-mono}
Let $x^d$ be a non-APN monomial over $\Ftn$. Then the multiset $[\de_d(a,b) \mid b \in \Ftn]$ is the same for each $a \in \Ftn^*$. Consequently, $D_d=\Ftn \sm \{0\}$ and
\begin{equation}\label{eqn-VBf}
|\VB_{n,d}| \ge \begin{cases}
              \frac{2^n+1}{3} & \mbox{if $n$ odd} \\
              \frac{2^n-1}{3} & \mbox{if $n$ even}.
            \end{cases}
\end{equation}
\end{proposition}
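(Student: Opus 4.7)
The plan is to establish Part (1) using a change-of-variables trick specific to monomials, then obtain Parts (2) and (3) as essentially immediate consequences via Theorem~\ref{thm-diffspecVB}.

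For Part (1), the key observation is that for $f(x)=x^d$ and $a \in \Ftn^*$, the substitution $x = ay$ transforms the equation $(x+a)^d + x^d = b$ into $a^d\bigl((y+1)^d + y^d\bigr) = b$, i.e.\ $(y+1)^d + y^d = b/a^d$. This yields a bijection between the solution sets, so that $\de_d(a,b) = \de_d(1, b/a^d)$. Since $b \mapsto b/a^d$ is a permutation of $\Ftn$, the multiset $[\de_d(a,b) \mid b \in \Ftn]$ coincides with $[\de_d(1,b) \mid b \in \Ftn]$ and is therefore independent of $a$.

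Part (2) is then immediate: because $x^d$ is non-APN, there exists some $a_0 \in \Ftn^*$ with $\de_d(a_0) \ge 4$, and Part (1) propagates this to every $a \in \Ftn^*$, giving $D_d = \Ftn \setminus \{0\}$. For Part (3), I would apply Theorem~\ref{thm-diffspecVB},
\[
|\VB_{n,d}| = \frac{1}{3}\sum_{a \in \Ftn^*, b \in \Ftn} \binom{\de_d(a,b)/2}{2}.
\]
By Part (2), for each $a \in \Ftn^*$ there is at least one $b$ with $\de_d(a,b) \ge 4$, contributing at least $\binom{2}{2}=1$ to the inner sum. Summing over the $2^n-1$ nonzero values of $a$ gives the crude bound $|\VB_{n,d}| \ge (2^n-1)/3$.

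The one remaining step, and the main (mild) obstacle, is to sharpen this to the stated integer bound. Since $|\VB_{n,d}|$ is always a non-negative integer, one may take the ceiling $\lceil (2^n-1)/3\rceil$. A direct mod-$3$ computation, using $2^n \equiv 1 \pmod 3$ for $n$ even and $2^n \equiv 2 \pmod 3$ for $n$ odd, shows that $\lceil(2^n-1)/3\rceil$ equals $(2^n-1)/3$ when $n$ is even and $(2^n+1)/3$ when $n$ is odd, matching \eqref{eqn-VBf}.
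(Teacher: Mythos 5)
Your proof is correct and follows essentially the same route as the paper: the identical change-of-variables argument ($x\mapsto ax$) gives the first claim, and the lower bound comes from the same count --- at least one vanishing flat per critical direction, divided by $3$, then sharpened by integrality of $|\VB_{n,d}|$. The only cosmetic difference is that you route the bound through the formula of Theorem~\ref{thm-diffspecVB}, whereas the paper argues directly that each vanishing flat yields exactly three of the $2^n-1$ critical directions; these are the same double count, and your explicit mod-$3$ ceiling step is in fact spelled out more carefully than the paper's terse ``which leads to~\eqref{eqn-VBf}''.
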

\begin{proof}
For distinct $a_1,a_2 \in \Ftn^*$ and $b \in \Ftn$, $x$ is a solution to $(x+a_1)^d+x^d=b$ if and only if $\frac{a_2}{a_1}x$ is a solution to $(x+a_2)^d+x^d=\left(\frac{a_2}{a_1}\right)^db$. Hence, $\de_d(a_1,b)=\de_d\left(a_2,\left(\frac{a_2}{a_1}\right)^db\right)$ and consequently, the multiset $[\de_d(a,b) \mid b \in \Ftn]$, and therefore $\de_d(a)$ is the same for each $a \in \Ftn^*$. Since $x^d$ is not APN, we have $\de_d(a) \ge 4$ for each $a \in \Ftn^*$ and thus, $D_d=\Ftn \sm \{0\}$. Namely, $x^d$ has $2^n-1$ critical directions. Note that along each critical direction, there exists at least one vanishing flat. Also, each vanishing flat gives three distinct critical directions. To cover all $2^n-1$ critical directions, one needs at least $\frac{2^n-1}{3}$ vanishing flats. Hence, we have $|\VB_{n,d}| \ge \frac{2^n-1}{3}$, which leads to~\eqref{eqn-VBf}.
\end{proof}

\begin{remark}\label{rem-upperbound}
In addition to \eqref{eqn-VBf}, since $\VB_{n,f} \subset \cB_n$, we have a natural upper bound
$$
|\VB_{n,f}| \le |\cB_n|=\frac{2^{n-2}(2^{n-1}-1)(2^n-1)}{3}.
$$
The upper bound is tight as $\VB_{n,1}=\cB_n$ and $|\VB_{n,1}|=\frac{2^{n-2}(2^{n-1}-1)(2^n-1)}{3}$.
\end{remark}

Equation \eqref{eqn-VBf} gives a general lower bound on the number of vanishing flats for monomials. In fact, when $n$ is even, the lower bound is tight.

\begin{theorem}
\label{thm-PQSinv}
Let $n$ be even and $3 \nmid d$. Let $\al$ be a primitive element of $\Ftn$ and $\ze=\al^{\frac{2^n-1}{3}}$. Then $\VB_{n,d} \supseteq \left\{ \left\{0,\al^i,\al^i\ze,\al^i\ze^2\right\} \mid 0 \le i \le \frac{2^n-4}{3} \right\}$. In particular, for the inverse function $x\mapsto x^{-1}:=x^{2^n-2}$, we have $\VB_{n,-1}=\left\{ \left\{0,\al^i,\al^i\ze,\al^i\ze^2\right\} \mid 0 \le i \le \frac{2^n-4}{3} \right\}$.
\end{theorem}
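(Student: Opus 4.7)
The plan is to split the argument into two steps: first prove the inclusion for every monomial $x^d$ with $3 \nmid d$, then upgrade it to equality for the inverse by a cardinality argument based on Theorem~\ref{thm-diffspecVB}.

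For the inclusion, I will use that since $n$ is even we have $3 \mid 2^n-1$, so $\ze = \al^{(2^n-1)/3}$ is a primitive cube root of unity in $\Ftn$, satisfying $1 + \ze + \ze^2 = 0$. For each index $i$, the set $S_i = \{0, \al^i, \al^i \ze, \al^i \ze^2\}$ consists of four distinct elements whose sum equals $\al^i(1 + \ze + \ze^2) = 0$, so $S_i \in \cB_n$. Applying $x^d$ and summing gives $\al^{id}(1 + \ze^d + \ze^{2d})$; since $3 \nmid d$, $\ze^d$ is again a primitive cube root of unity and the bracket vanishes, placing $S_i$ in $\VB_{n,d}$. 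For $0 \le i \le (2^n-4)/3$, the sets $S_i \setminus \{0\} = \al^i \langle \ze \rangle$ run through the $(2^n-1)/3$ distinct cosets of $\langle \ze \rangle$ in $\Ftn^*$, so the $S_i$ are pairwise distinct and the inclusion is established.

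For the inverse function I note that $d = 2^n - 2 \equiv 2 \pmod{3}$ (since $n$ is even), so the inclusion above already yields $|\VB_{n,-1}| \ge (2^n-1)/3$. To upgrade this to equality, I compute just enough of the differential spectrum of $f(x) = x^{2^n-2}$. For $a \in \Ftn^*$, the pair $\{0, a\}$ contributes $f(a) + f(0) = 1/a$, and for $x \notin \{0,a\}$ and $b \ne 0$ the equation $f(x+a) + f(x) = b$ reduces to $x^2 + ax + a/b = 0$; substituting $x = ay$ converts it to $y^2 + y + 1/(ab) = 0$, which has exactly two solutions precisely when $\Tr(1/(ab)) = 0$. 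At $b = 1/a$ the condition becomes $\Tr(1) = 0$, which is true for even $n$, and the extra solutions are $x = a\ze, a\ze^2$. Hence $\de_{-1}(a, 1/a) = 4$, while $\de_{-1}(a, b) \in \{0, 2\}$ for $b \ne 1/a$. Theorem~\ref{thm-diffspecVB} then collapses the count to
\[
|\VB_{n,-1}| = \frac{1}{3} \sum_{a \in \Ftn^*} \binom{\de_{-1}(a,1/a)/2}{2} = \frac{1}{3}(2^n-1) \binom{2}{2} = \frac{2^n-1}{3},
\]
matching the lower bound and forcing equality with the explicit block family.

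The main obstacle is the differential analysis of the inverse function: one must pin down that $b = 1/a$ is the unique value with $\de_{-1}(a, b) > 2$, which rests on identifying $\{0, a, a\ze, a\ze^2\}$ as the full preimage and crucially on $\Tr(1) = 0$ for even $n$. The inclusion step is a routine algebraic verification once one observes that $\ze^d$ remains a primitive cube root of unity whenever $3 \nmid d$, and the final counting is immediate from Theorem~\ref{thm-diffspecVB} once the differential spectrum of $x^{-1}$ is in hand.
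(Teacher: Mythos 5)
Your proof is correct, and the inclusion half coincides with the paper's (which simply records it as following from the definition). Where you genuinely diverge is the equality step for $x^{-1}$: the paper argues structurally, first ruling out vanishing flats that avoid $0$ --- for such a flat, $x_1+x_2=x_3+x_4=a$ together with $x_1^{-1}+x_2^{-1}=x_3^{-1}+x_4^{-1}$ forces $x_1x_2=x_3x_4=b$, so all four elements would be distinct roots of the single quadratic $x^2+ax+b$, which is impossible --- and then solving $x_1+x_2=\be$, $x_1^{-1}+x_2^{-1}=\be^{-1}$ to conclude that every flat through $0$ equals $\{0,\be,\be\ze,\be\ze^2\}$. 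You instead determine the relevant part of the differential spectrum of the inverse function (only $b=1/a$ gives $\de_{-1}(a,b)=4$, with solution set $\{0,a,a\ze,a\ze^2\}$, using $\Tr(1)=0$ for $n$ even) and then count via Theorem~\ref{thm-diffspecVB}, matching your lower bound of $(2^n-1)/3$ blocks. The underlying algebra --- a quadratic over $\F_{2^n}$ --- is the same in both arguments, but the trade-offs differ: the paper's route is self-contained and exhibits the block set directly with no counting at all, while yours recovers Nyberg's differential spectrum of $x^{-1}$ (the $n$ even, $d=2^n-2$ entry of Table~\ref{tab-VF}) as a by-product and delegates the finish to the general counting formula; your route also requires the verification that the $(2^n-1)/3$ listed blocks are pairwise distinct (the cosets $\al^i\langle\ze\rangle$), which you do correctly and which the paper's direct method never needs.
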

\begin{proof}
If $3 \nmid d$, then by definition, $\VB_{n,d} \supseteq \left\{ \left\{0,\al^i,\al^i\ze,\al^i\ze^2\right\} \mid 0 \le i \le \frac{2^n-4}{3} \right\}$. Now, we proceed to show that each vanishing flat in $\VB_{n,-1}$ contains $0$, by reduction to absurdity. Assume that $\{x_1,x_2,x_3,x_4\}$ is a vanishing flat with $x_1x_2x_3x_4 \ne 0$. Combining $x_1+x_2=x_3+x_4$ and $x_1^{-1}+x_2^{-1}=x_3^{-1}+x_4^{-1}$, we have $x_1+x_2=x_3+x_4=a$ and $x_1x_2=x_3x_4=b$ for some nonzero $a$ and $b$. Hence, $x_i$, $1 \le i \le 4$, are four distinct solutions  to the quadratic equation $x^2+ax+b=0$, which is impossible.

Without loss of generality, we assume $x_4=0$ and $x_3=\be$ for some nonzero $\be$. Hence, we have $x_1+x_2=\be$ and $x_1^{-1}+x_2^{-1}=\be^{-1}$. It is easy to see that $\{x_1,x_2\}=\{\be\ze,\be\ze^2\}$. Therefore, two elements $0$ and $\be$ determine the whole vanishing flat $\{x_1,x_2,x_3,x_4\}=\left\{0,\be,\be\ze,\be\ze^2\right\}$. In conclusion, $\VB_{n,-1}=\left\{ \left\{0,\al^i,\al^i\ze,\al^i\ze^2\right\} \mid 0 \le i \le \frac{2^n-4}{3} \right\}$.
\end{proof}

The partial quadruple system can also be determined for the non-APN Gold function.  We shall use $(a,b)$ to denote the greatest common divisor of $a$ and $b$.

\begin{theorem}\label{thm-PQSGold}
Let $f(x)=x^{2^t+1}$ be a function over $\Ftn$ with $(n,t)=s>1$. For $a \in \Fts \sm \{0,1\}$ and $x \in \Ftn^*$, we define a $2$-dimensional vector space $V_{a,x}=\{0,x,ax,(1+a)x\}$ and
\begin{align*}
U_{a,x}=\{ & \left\{c,x+c,ax+c,(1+a)x+c\right\} \mid \mbox{$c$ ranges} \\
& \mbox{over all coset representatives of $V_{a,x}$ in $\Ftn$} \}.
\end{align*}
Then $\displaystyle \VB_{n,f}=\bigcup_{\substack{a \in \Fts \sm \{0,1\} \\ x \in \Ftn^*}} U_{a,x}$ and $\displaystyle |\VB_{n,f}|=\frac{2^{n-2}(2^{s-1}-1)(2^n-1)}{3}$.
\end{theorem}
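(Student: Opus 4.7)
My plan is to split the proof into two independent parts: first, compute $|\VB_{n,f}|$ by determining the differential spectrum, and second, identify the block set $\VB_{n,f}$ as the advertised union of coset families.

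For the spectrum, I would compute $f(x+a)+f(x)=ax^{2^t}+a^{2^t}x+a^{2^t+1}$ and observe that the linear part $L_a(x)=ax^{2^t}+a^{2^t}x$ has kernel $\{y\in\Ftn:(y/a)^{2^t-1}=1\}\cup\{0\}=a\,\F_{2^s}$, using $\F_{2^t}^{*}\cap\Ftn^{*}=\F_{2^s}^{*}$. Hence the image of $L_a$ has size $2^{n-s}$, every fibre has size $2^s$, and $\de_f(a,b)\in\{0,2^s\}$ for every $(a,b)\in\Ftn^*\times\Ftn$. Since $s>1$, we get $\de_f=2^s>2$ and the differential spectrum is two-valued, so Proposition~\ref{prop-equivalence} yields the stated cardinality $|\VB_{n,f}|=\frac{2^{n-2}(2^n-1)(2^{s-1}-1)}{3}$ at once.

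For the block characterization, I would parametrize an arbitrary candidate $2$-flat as $\{c,c+v_1,c+v_2,c+v_1+v_2\}$ with $v_1,v_2\in\Ftn^*$ linearly independent over $\F_2$. Expanding $(c+v)^{2^t+1}=c^{2^t+1}+c^{2^t}v+cv^{2^t}+v^{2^t+1}$ and summing over the four vertices, the $c$-dependent contributions cancel in characteristic $2$, leaving
\[
\sum_{v\in\{0,v_1,v_2,v_1+v_2\}}f(c+v)=v_1^{2^t}v_2+v_1 v_2^{2^t}=v_1^{2^t+1}(a+a^{2^t}),
\]
where $a:=v_2/v_1$. This vanishes precisely when $a^{2^t}=a$, i.e.\ $a\in\F_{2^t}\cap\Ftn=\F_{2^s}$; the $\F_2$-linear independence of $v_1,v_2$ rules out $a\in\F_2$, so $a\in\F_{2^s}\sm\{0,1\}$. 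Setting $x:=v_1$ rewrites the flat exactly as $c+V_{a,x}$, which proves both $\VB_{n,f}\subseteq\bigcup_{a,x}U_{a,x}$ and the reverse inclusion in one stroke.

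The main obstacle I anticipate is bookkeeping rather than algebra: the union $\bigcup_{a,x}U_{a,x}$ is heavily overcounted (for example $V_{a,x}=V_{1+a,x}=V_{a^{-1},ax}$), so trying to recover the cardinality by summing $|U_{a,x}|=2^{n-2}$ over admissible pairs $(a,x)$ and then correcting for multiplicities would be tedious. Routing the count through Proposition~\ref{prop-equivalence} bypasses this entirely and leaves the union description to do nothing more than expose the combinatorial shape of the block set.
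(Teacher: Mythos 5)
Your proposal is correct, and its second half is essentially the paper's own argument: the paper likewise expands the sum of $f$ over a flat, arrives at $x_1^{2^t+1}+x_2^{2^t+1}+(x_1+x_2)^{2^t+1}=x_1^{2^t}x_2+x_1x_2^{2^t}$, and reads off the condition $(x_1/x_2)^{2^t}=x_1/x_2$, i.e.\ $x_1/x_2\in\Fts\sm\{0,1\}$. The only cosmetic difference there is that the paper first reduces to $2$-dimensional subspaces through $0$ by noting translation invariance of vanishing flats, whereas you carry the translate $c$ through the expansion and obtain that invariance on the fly. Where you genuinely diverge is the count. The paper stays combinatorial: it observes that each vanishing subspace $\{0,x_1,x_2,x_3\}$ is written as $V_{a,x}$ in exactly six ways ($V_{x_2/x_1,x_1}$, $V_{x_3/x_1,x_1}$, $V_{x_1/x_2,x_2}$, $V_{x_3/x_2,x_2}$, $V_{x_1/x_3,x_3}$, $V_{x_2/x_3,x_3}$), so there are $(2^s-2)(2^n-1)/6$ distinct vanishing subspaces, each contributing $2^{n-2}$ cosets, giving $|\VB_{n,f}|=\frac{2^{n-2}(2^{s-1}-1)(2^n-1)}{3}$; in other words, it confronts head-on exactly the overcounting you chose to sidestep. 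You instead compute the differential spectrum (the kernel of $L_a(x)=ax^{2^t}+a^{2^t}x$ is $a\Fts$, so the spectrum is two-valued $\{0,2^s\}$) and invoke Proposition~\ref{prop-equivalence}. Both routes are sound: yours trades the multiplicity bookkeeping for a short, standard spectrum computation and leans on Theorem~\ref{thm-diffspecVB} via Proposition~\ref{prop-equivalence}, while the paper's six-fold count is self-contained within the combinatorial framework and records explicitly the symmetry among the representations $V_{a,x}$, which is also what makes the coset structure exploited later (e.g.\ in Corollary~\ref{cor-coverGold}) transparent.
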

\begin{proof}
Note that if $\{x_1,x_2,x_3,x_4\}$ is a vanishing flat of $f$, then so is $\{x_1+c,x_2+c,x_3+c,x_4+c\}$ for each $c \in \Ftn$. Hence, it suffices to analyze the $2$-dimensional vanishing subspaces, which have the form $\{0,x_1,x_2,x_3\}$, where the nonzero $x_1$, $x_2$, $x_3$ are distinct, $x_1+x_2+x_3=0$ and $f(0)+f(x_1)+f(x_2)+f(x_3)=x_1^{2^t+1}+x_2^{2^t+1}+x_3^{2^t+1}=0$.  Note that $x_1^{2^t+1}+x_2^{2^t+1}+x_3^{2^t+1}=x_1^{2^t+1}+x_2^{2^t+1}+(x_1+x_2)^{2^t+1}=0$. Simplifying the last equation, we have $(\frac{x_1}{x_2})^{2^t}=\frac{x_1}{x_2}$. Thus, $\frac{x_1}{x_2} \in \Fts$. Since $x_1 \ne 0$ and $x_1 \ne x_2$, we have $\frac{x_1}{x_2} \in \Fts \sm \{0,1\}$. Indeed, we have shown for each $1 \le j<\ell\le 3$, $\frac{x_j}{x_\ell} \in \Fts \sm \{0,1\}$. Note that each $2$-dimensional vanishing subspace $\{0,x_1,x_2,x_3\}$ can be expressed in exactly six different ways: $V_{\frac{x_2}{x_1},x_1}$, $V_{\frac{x_3}{x_1},x_1}$, $V_{\frac{x_1}{x_2},x_2}$, $V_{\frac{x_3}{x_2},x_2}$, $V_{\frac{x_1}{x_3},x_3}$ and $V_{\frac{x_2}{x_3},x_3}$. Consequently, we have $\VB_{n,f}=\bigcup_{\substack{a \in \Fts \sm \{0,1\} \\ x \in \Ftn^*}} U_{a,x}$. Moreover, $|\VB_{n,f}|=\frac{2^{n-2}(2^s-2)(2^n-1)}{6}=\frac{2^{n-2}(2^{s-1}-1)(2^n-1)}{3}$.
\end{proof}

So far, we have determined the partial quadruple systems associated with the inverse function (Theorem~\ref{thm-PQSinv}) and the Gold  functions (Theorem~\ref{thm-PQSGold}). It is worthy to note that there are a few power functions whose differential spectra are known. Therefore, employing Theorem~\ref{thm-diffspecVB}, the number of vanishing flats follows immediately. We summarize these results in Table~\ref{tab-VF}, where $s=(n,t)$ and $K$ is the Kloosterman sum with the following explicit expression \cite[Remark 3]{BCC11}:
$$
K=1+\frac{(-1)^{n-1}}{2^{n-1}}\sum_{i=0}^{\lf \frac{n}{2} \rf} (-1)^i \binom{n}{2i}7^i.
$$
For two positive integers $a$ and $b$, we define
$$
\De(a,b)=\begin{cases}
  1 & \mbox{if $a \mid b$,} \\
  0 & \mbox{if $a \nmid b$.}
\end{cases}
$$
By Proposition~\ref{prop-mono}, for every monomial, each entry among the differential spectrum is divisible by $2^n-1$. For simplicity, we shall use $w_i=\ell_i/(2^n-1)$ in Table~\ref{tab-VF}. The column ``Ref'' indicates the references where the differential spectra are calculated.

\begin{table*}\renewcommand{\arraystretch}{1.5}

\begin{center}
\caption{Power functions $x^d$ over $\Ftn$ with known differential spectra and number of vanishing flats}

\begin{tabular}{|c|c|c|c|c|}
\hline
$n$ & $d$ & $|\VB_{n,d}|$ & Differential Spectrum &  Reference \\ \hline
\multirow{2}{*}{$n \ge 2$} & $2^t+1$ & \multirow{2}{*}{$\dfrac{2^{n-2}(2^{s-1}-1)(2^n-1)}{3}$} & $w_0=2^n-2^{n-s}$ & \multirow{2}{*}{\cite[Section 5.2]{BCC}} \\
& $1 \le t \le n/2$ & & $w_{2^s}=2^{n-s}$ & \\ \hline
$n \ne 3t$ & $2^{2t}-2^t+1$ & \multirow{2}{*}{$\dfrac{2^{n-2}(2^{s-1}-1)(2^n-1)}{3}$} & $w_0=2^n-2^{n-s}$ & \multirow{2}{*}{\cite[Theorem 2]{BCC}} \\
$n/s$ odd & $2 \le t \le n/2$ & & $w_{2^s}=2^{n-s}$ & \\ \hline
\multirow{3}{*}{$n$ even}& \multirow{3}{*}{$2^n-2$} & \multirow{3}{*}{$\dfrac{2^n-1}{3}$} & $w_0=2^{n-1}+1$ & \multirow{3}{*}{\cite[Proposition 6]{N}} \\
 & & & $w_{2}=2^{n-1}-2$ & \\
 & & & $w_{4}=1$ & \\ \hline
\multirow{3}{*}{$n=4t$}& \multirow{3}{*}{$2^{2t}+2^t+1$} & \multirow{3}{*}{$\dfrac{\left(2^{n-3}-2^{3t-3}\right)(2^n-1)}{3}$} & $w_0=5\cdot2^{n-3}-2^{3t-3}$ & \multirow{2}{*}{\cite[Example 4]{BCC}} \\
 & & & $w_{2}=2^{n-2}+2^{3t-2}$ & \\
 & & & $w_{4}=2^{n-3}-2^{3t-3}$ & \cite[Theorem 1]{XY}\\ \hline
\multirow{4}{*}{$n \ge 6$}& \multirow{4}{*}{$7$} & & $w_0=2^{n-1}+2w_6+w_4$ & \multirow{4}{*}{\cite[Theorem 5]{BCC11}} \\
& & $\left(\frac{2^{n-2}+1-3w_4}{6}+(-1)^n\frac{K}{8}\right)$ & $w_{2}=2^{n-1}-3w_6-2w_4$ & \\
& & $\cdot (2^n-1)$ & $w_{4}=\De(2,n)$ & \\
& & & $w_{6}=\frac{2^{n-2}+1-5w_4}{6}+(-1)^n\frac{K}{8}$ & \\ \hline
\multirow{5}{*}{$n \ge 6$}& $2^{n-2}-1$ &  & $w_0=2^{n-1}+2w_6+3w_8$ & \\
& or & $\Big(\frac{2^{n-1}-3-(-1)^n5}{12}+(-1)^n\frac{K}{8}$ & $w_{2}=2^{n-1}-3w_6-4w_8$ & \cite[Corollary 5]{BCC11} \\
& $2^{\frac{n-1}{2}}-1$ & $+w_8\Big)\cdot(2^n-1)$ & $w_{6}=\frac{2^{n-1}-3-(-1)^n5}{12}$ & \cite[Theorem 5]{BP} \\
& $n$ odd & & $+(-1)^n\frac{K}{8}-w_8$ & \\
&  & & $w_{8}=\De(3,n)$ & \\ \hline
& \multirow{4}{*}{$2^{\frac{n}{2}}-1$} &  & $w_0=2^{n-1}+2^{\frac{n}{2}-1}-2+w_4$ & \multirow{4}{*}{\cite[Theorem 7]{BCC11}} \\
$n \ge 6$ & & $\left(\left(2^{\frac{n}{2}-1}-1\right)\Big(2^{\frac{n}{2}-2}-1\Big)\right.$ & $w_{2}=2^{n-1}-2^{\frac{n}{2}-1}+1-2w_4$ & \\
$n$ even & & $+w_4\Big)\cdot\frac{2^n-1}{3}$ & $w_{4}=1-\De(4,n)$ & \\
& & & $w_{2^{\frac{n}{2}}-2}=1$ & \\ \hline
& \multirow{3}{*}{$2^{\frac{n}{2}+1}-1$} &  & $w_0=2^{n-1}+2^{\frac{n}{2}-1}-1$ & \multirow{3}{*}{\cite[Theorem 8]{BCC11}} \\
$n \ge 6$  & & $\dfrac{2^{\frac{n}{2}-2}\left(2^{\frac{n}{2}-1}-1\right)(2^n-1)}{3}$ & $w_{2}=2^{n-1}-2^{\frac{n}{2}-1}$ & \\
$n$ even & & & $w_{2^{\frac{n}{2}}}=1$ & \\
\hline
& \multirow{3}{*}{$2^{\frac{n+3}{2}}-1$} & & $w_0=2^{n-1}+2w_6+2\De(3,n)$ & \multirow{3}{*}{\cite[Theorems 1,5]{BP}} \\
$n \ge 7$  & & $\left(\frac{2^{n-2}+1}{6}-\frac{K}{8}\right)(2^n-1)$ & $w_{2}=2^{n-1}-3w_6-3\De(3,n)$ & \\
$n$ odd & & & $w_6=\frac{2^{n-2}+1}{6}-\frac{K}{8}$ & \\ \hline
& & & $w_0=89 \cdot 2^{n-7}+7\cdot 2^{t-7}(4-K)$ & \multirow{5}{*}{\cite[Theorem 1.4]{XYY}} \\
$n=2t$ & $2^{t+1}+2^{\frac{t+1}{2}}+1$ & & $w_{2}=5 \cdot 2^{n-5}-5 \cdot 2^{t-5}(4-K)$ & \\
$t \ge 5$ & or & $\dfrac{2^{n-2}(2^n-1)}{3}$ & $w_{4}=7 \cdot 2^{n-6}+9 \cdot 2^{t-6}(4-K)$ & \\
$t$ odd & $2^{t+1}+3$ &  & $w_{6}=2^{n-5}-2^{t-5}(4-K)$ & \\
& & & $w_{8}=2^{n-7}-2^{t-7}(4-K)$ & \\ \hline
\end{tabular}
\label{tab-VF}
\end{center}
\end{table*}

\begin{remark}
The following are some observations based on Table~\textup{\ref{tab-VF}}, regarding the relationship between the differential uniformity, the number of vanishing flats and the differential spectrum.
\begin{itemize}
\item[$(1)$] By Proposition~\textup{\ref{prop-equivalence}}, for the first two classes, the differential uniformity, the number of vanishing flats and the differential spectrum carry the same information.
\item[$(2)$] Comparing with the differential uniformity, there are some cases where the number of vanishing flats is a finer criterion to evaluate the planarity. For example, the third and fourth classes have the same differential uniformity but distinct number of vanishing flats. The same happens to the fifth and sixth classes when $3 \nmid n$.
\item[$(3)$] By Theorem~\textup{\ref{thm-diffspecVB}}, the number of vanishing flats follows from the differential spectrum. The converse is not true as the number of vanishing flats does not contain the information of $w_0$ and $w_2$. For example, whether $3\, |\, n$ or not, the differential spectra of the ninth class differ at $w_0$ and $w_2$. However, the number of vanishing flats remains the same.
\end{itemize}
\end{remark}

In Table~\ref{tab-VFexample}, we list the number of vanishing flats of all monomials over $\F_{2^n}$, where $2 \le n \le 8$. We use the superscript $\bigstar$ to mark the monomials whose number of vanishing flats is not obtained from Table~\ref{tab-VF} or Remark~\ref{rem-upperbound}. Note that by Remark~\ref{rem-transformation}, the number of vanishing flats of $x^d$ is the same as $x^{2^id}$, for each $0 \le i \le n-1$, as well as, $x^{2^id^{-1}}$ for each $0 \le i \le n-1$, if $(d,2^n-1)=1$. Among these monomials which necessarily have the same number of vanishing flats, we only list one of them.

\begin{table*}\renewcommand{\arraystretch}{1.5}

\begin{center}
\caption{The number of vanishing flats of $x^d$ over $\Ftn$, for $2 \le n \le 8$}
\begin{tabular}{|c|c|}
\hline
$n$ & $\left(d,|\VB_{n,d}|\right)$ \\ \hline
$2$ & $(1,1)$ \\ \hline
$3$ & $(1,14)$, $(3,0)$ \\ \hline
$4$ & $(1,140)$, $(3,0)$, $(5,20)$, $(7,5)$ \\ \hline
$5$ & $(1,1240)$, $(3,0)$, $(5,0)$, $(15,0)$ \\ \hline
\multirow{2}{*}{$6$} & $(1,10416)$, $(3,0)$, $(5,336)$, $(7,84)$, $(9,1008)$, $(11,336)^\bigstar$,  \\
                     & $(15,126)$, $(21,2520)^\bigstar$, $(27,1260)^\bigstar$, $(31,21)$  \\ \hline
$7$ & $(1,85344)$, $(3,0)$, $(5,0)$, $(7,889)$, $(9,0)$, $(11,0)$, $(19,889)^\bigstar$, $(21,889)$, $(23,0)$, $(63,0)$ \\ \hline
\multirow{4}{*}{$8$} & $(1,690880)$, $(3,0)$, $(5,5440)$, $(7,3655)$, $(9,0)$, $(11,5185)^\bigstar$, $(13,5185)^\bigstar$, $(15,1785)$,\\
                     & $(17,38080)$, $(19,4420)^\bigstar$, $(21,2040)$, $(23,4930)^\bigstar$, $(25,4420)^\bigstar$, $(27,15810)^\bigstar$, \\
                     & $(31,2380)$, $(39,0)$, $(43,27625)^\bigstar$, $(45,1785)^\bigstar$, $(51,66300)^\bigstar$, $(53,7480)^\bigstar$, $(55,5440)^\bigstar$,  \\
                     &  $(63,3570)$, $(85,174760)^\bigstar$, $(87,24480)^\bigstar$, $(95,2380)^\bigstar$, $(111,1020)^\bigstar$, $(119,41905)^\bigstar$, $(127,85)$
                 \\ \hline
\end{tabular}
\label{tab-VFexample}
\end{center}
\end{table*}

Finally, we mention a connection between the vanishing flats of monomials and the binary cyclic codes defined by the same monomials. For a more detailed account on such cyclic codes, we refer the reader to~\cite{CCZ,CTZ}.

\begin{proposition}\label{prop-vfcdoe}
Let $\al$ be a primitive element of $\Ftn$. Let $C_d$ be a binary cyclic code of length $2^n-1$ having two zeroes $\al$ and $\al^d$, in which each nonzero element of $\Ftn$ indexes a coordinate of the codewords in $C_d$. Each vanishing flat of $x^d$ must be one of the following$:$
\begin{itemize}
\item[$(1)$] For each weight three codeword, the zero element plus the three nonzero elements indexing three nonzero coordinates form a vanishing flat.
\item[$(2)$] For each weight four codeword, the four nonzero elements indexing four nonzero coordinates form a vanishing flat.
\end{itemize}
Consequently, the number of weight three codewords of $C_d$ equals the number of vanishing flats in $\VB_{n,d}$ containing $0$, and the number of weight four codewords of $C_d$ equals the  number of vanishing flats in $\VB_{n,d}$ not containing $0$.
\end{proposition}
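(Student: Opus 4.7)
The plan is to translate directly between the two defining parity check equations of the cyclic code $C_d$ and the two defining equations of a vanishing flat of $x^d$. Indexing the coordinate position $i$ by the field element $\al^i \in \Ftn^*$, a binary vector $c=(c_0,\ldots,c_{2^n-2})$ lies in $C_d$ if and only if $\sum_i c_i \al^i = 0$ and $\sum_i c_i \al^{id} = 0$, these being exactly the conditions that $\al$ and $\al^d$ are zeroes of the associated codeword polynomial.

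First I would handle the weight three case. A weight three codeword has exactly three nonzero positions, corresponding to three pairwise distinct elements $y_1,y_2,y_3 \in \Ftn^*$ with $y_1+y_2+y_3=0$ and $y_1^d+y_2^d+y_3^d=0$. Adjoining $0$, the set $\{0,y_1,y_2,y_3\}$ consists of four distinct elements summing to zero, and $0^d+y_1^d+y_2^d+y_3^d=0$, so it is a vanishing flat of $x^d$ containing $0$. Conversely, any vanishing flat containing $0$, say $\{0,y_1,y_2,y_3\}$, produces a weight three codeword by placing $1$s at the positions indexed by $y_1,y_2,y_3$, because the two defining equations of a vanishing flat are exactly the two parity check equations of $C_d$. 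This yields a bijection between weight three codewords of $C_d$ and vanishing flats in $\VB_{n,d}$ containing $0$.

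Next I would treat weight four codewords analogously: their four nonzero positions correspond to four pairwise distinct nonzero field elements $y_1,y_2,y_3,y_4$ with $y_1+y_2+y_3+y_4=0$ and $y_1^d+y_2^d+y_3^d+y_4^d=0$, which directly form a vanishing flat not containing $0$; conversely, each vanishing flat in $\VB_{n,d}$ avoiding $0$ determines a unique such codeword by reading off its support.

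Finally I would observe that a vanishing flat, being a $2$-dimensional flat in $\Ftn$, consists of four distinct elements and either contains $0$ or does not, so this dichotomy exhausts $\VB_{n,d}$ and the two bijections above yield both the classification statement and the two counting identities. I do not expect any genuine obstacle here; the argument is a direct dictionary between the parity check equations defining $C_d$ and the equations defining $\VB_{n,d}$, with the only care needed being that distinct coordinate positions of a codeword correspond to distinct field elements, which is immediate since positions are indexed by distinct powers of $\al$.
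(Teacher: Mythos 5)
Your proposal is correct and follows essentially the same approach as the paper: the paper's proof simply writes down the parity-check matrix $\begin{pmatrix} 1 & \al & \cdots & \al^{2^n-2} \\ 1 & \al^d & \cdots & \al^{(2^n-2)d} \end{pmatrix}$ and notes that the rest follows from the definition of vanishing flats, which is exactly the dictionary you spell out. Your version just makes explicit the weight-three/weight-four case analysis (including the observation that $0^d=0$ lets you adjoin the zero element) that the paper leaves to the reader.
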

\begin{proof}
First, we note that the binary cyclic code $C_d$ has a parity-check matrix
$$
\begin{pmatrix}
1 & \al & \al^2 & \cdots & \al^{2^n-2} \\
1 & \al^d & \al^{2d} & \cdots & \al^{(2^n-2)d}
\end{pmatrix}.
$$
The rest follows easily from the definition of vanishing flats.
\end{proof}

\begin{remark}
Proposition~\ref{prop-vfcdoe} can be further extended. Let $f$ be a polynomial over $\Ftn$. Using the same notation as in Proposition~\ref{prop-vfcdoe}, there is a one-to-one correspondence between the vanishing flats of $f$ and the indices of the weight four codewords in the binary linear code having a parity-check matrix
$$
\begin{pmatrix}
0 & 1 & \al & \al^2 & \cdots & \al^{2^n-2} \\
f(0) & f(1) & f(\al) & f(\al^{2}) & \cdots & f(\al^{2^n-2})
\end{pmatrix}.
$$
\end{remark}

\subsection{Vanishing Flats and Partial Quadruple Systems of Dembowski-Ostrom Polynomials}

A polynomial $f$ over $\Ftn$ is of \emph{Dembowski-Ostrom} (DO) type if
$$
f=\sum_{0 \le i<j<n} c_{ij}x^{2^i+2^j},
$$
where $c_{ij} \in \Ftn$. For DO polynomials, we have the following result on the number of vanishing flats.

\begin{theorem}
\label{thm-cosetstrucDO}
Let $\displaystyle f(x)=\sum_{0 \le i<j <n}c_{i,j}x^{2^i+2^j}$ be a non-APN DO polynomial over $\Ftn$, $n \ge 2$. The following hold$:$
\begin{itemize}
\item[$(1)$] If $\{x_1,x_2,x_3,x_4\} \in \VB_{n,f}$, then $\{\{x_1+a,x_2+a,x_3+a,x_4+a\} \mid a \in \Ftn\} \subset \VB_{n,f}$ for each $a \in \Ftn$. Consequently, $2^{n-2} \bigm| |\VB_{n,f}|$ and $|\VB_{n,f}| \ge 2^{n-2}$.
\item[$(2)$] For each $a \in \Ftn$, the subset $\{a,x_1+a,x_2+a,x_1+x_2+a\} \in \VB_{n,f}$ if and only if
$$
\sum_{0 \le i<j <n}c_{i,j}\left(x_1^{2^i}x_2^{2^j} + x_1^{2^j}x_2^{2^i}\right) = 0.
$$
\end{itemize}
\end{theorem}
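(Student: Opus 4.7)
The plan rests on one structural fact about DO polynomials: for $f(x)=\sum_{0\le i<j<n}c_{i,j}x^{2^i+2^j}$, the map
\[
B(x,y):=f(x+y)+f(x)+f(y)
\]
is $\F_2$-bilinear. This follows monomial-by-monomial, since $(x+y)^{2^i+2^j}=x^{2^i+2^j}+y^{2^i+2^j}+x^{2^i}y^{2^j}+x^{2^j}y^{2^i}$, so that
\[
B(x,y)=\sum_{0\le i<j<n}c_{i,j}\bigl(x^{2^i}y^{2^j}+x^{2^j}y^{2^i}\bigr),
\]
which is $\F_2$-linear in each variable. I will set $f(0)=0$ throughout, since adding a constant preserves $\VB_{n,f}$ by Remark~\ref{rem-transformation}(2).

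For part (1), the goal is to show $\{x_1+a,x_2+a,x_3+a,x_4+a\}\in\VB_{n,f}$ whenever $\{x_1,x_2,x_3,x_4\}\in\VB_{n,f}$. Using the identity $f(x_i+a)=f(x_i)+f(a)+B(x_i,a)$ and summing over $i=1,2,3,4$, the four copies of $f(a)$ vanish in characteristic $2$, and the bilinear remainder telescopes to $B(x_1+x_2+x_3+x_4,a)=B(0,a)=0$. Hence $\sum_i f(x_i+a)=\sum_i f(x_i)=0$, proving translation invariance. The divisibility then follows by decomposing $\VB_{n,f}$ into $\Ftn$-orbits: each $2$-dimensional flat $F$ has translation-stabilizer equal to its direction space, a $2$-dimensional $\F_2$-subspace, so every orbit has exactly $2^n/4=2^{n-2}$ elements. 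Since $f$ is non-APN, $\VB_{n,f}\neq\varnothing$, giving both $2^{n-2}\mid |\VB_{n,f}|$ and $|\VB_{n,f}|\ge 2^{n-2}$.

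For part (2), expand each term via $f(y+a)=f(y)+f(a)+B(y,a)$:
\[
\begin{aligned}
f(a)+f(x_1+a)+f(x_2+a)&+f(x_1+x_2+a)\\
=4f(a)+2f(x_1)+2f(x_2)&+f(x_1+x_2)+2B(x_1,a)+2B(x_2,a).
\end{aligned}
\]
In characteristic $2$ only $f(x_1+x_2)=f(x_1)+f(x_2)+B(x_1,x_2)$ survives, and another application of char $2$ kills $2f(x_1)+2f(x_2)$, leaving the sum equal to $B(x_1,x_2)$. Substituting the explicit bilinear formula above then gives the stated condition. Conversely, any flat written in the form $\{a,x_1+a,x_2+a,x_1+x_2+a\}$ exhausts all $2$-dimensional flats, so this equivalence is exactly the vanishing criterion.

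I do not anticipate a serious obstacle; the only care needed is bookkeeping of the characteristic-$2$ cancellations and verifying that the stabilizer of a $2$-flat under $\Ftn$-translation is really its direction space (so that the orbit length is $2^{n-2}$ rather than some divisor). Everything else is a direct consequence of the bilinearity of $B$.
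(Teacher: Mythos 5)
Your proof is correct and takes essentially the same route as the paper: your bilinear form $B(x,y)=f(x+y)+f(x)+f(y)$ is precisely the cross-term sum $\sum_{0\le i<j<n}c_{i,j}\left(x^{2^i}y^{2^j}+x^{2^j}y^{2^i}\right)$ that the paper manipulates directly, the translation-invariance computation in part (1) is identical (your $B\!\left(\sum_\ell x_\ell,a\right)=0$ is the paper's $c_{ij}a^{2^i}\left(\sum_\ell x_\ell\right)^{2^j}=0$), your orbit--stabilizer argument just formalizes the paper's remark that $\VB_{n,f}$ is a union of $2$-dimensional subspaces together with all their cosets, and part (2) rests in both cases on the identity $f(x_1+x_2)=f(x_1)+f(x_2)+B(x_1,x_2)$. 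The only blemish is a bookkeeping slip in your displayed equation for part (2): at that stage the coefficients of $f(x_1)$ and $f(x_2)$ should be $1$ (they become $2$ only after substituting $f(x_1+x_2)=f(x_1)+f(x_2)+B(x_1,x_2)$), but your subsequent prose performs the correct cancellation and lands on the right answer $B(x_1,x_2)$.
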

\begin{proof}
(1) Let $\displaystyle f(x)=\sum_{0 \le i<j <n} c_{ij}x^{2^i+2^j}$. Since $\{x_1,x_2,x_3,x_4\} \in \VB_{n,f}$, we have $x_1+x_2+x_3+x_4=0$ and $f(x_1)+f(x_2)+f(x_3)+f(x_4)=0$. For $a \in \Ftn$ and $1 \le \ell \le 4$, we have
\begin{align*}
f(x_\ell+a)=&\sum_{0 \le i<j <n} c_{ij}(x_\ell+a)^{2^i+2^j}\\
=&\sum_{0 \le i<j <n} c_{ij} \left(x_\ell^{2^i+2^j}+a^{2^i}x_\ell^{2^j}+a^{2^j}x_\ell^{2^i}+a^{2^i+2^j}\right)\\
=& f(x_{\ell})+f(a)+\sum_{0 \le i<j <n} c_{ij}\left(a^{2^i}x_\ell^{2^j}+a^{2^j}x_\ell^{2^i}\right)
\end{align*}
Note that $\displaystyle \sum_{\ell=1}^4 c_{ij}a^{2^i}x_\ell^{2^j}=c_{ij}a^{2^i}\left(\sum_{\ell=1}^4 x_\ell\right)^{2^j}=0$ and $\displaystyle \sum_{\ell=1}^4 c_{ij}a^{2^j} x_\ell^{2^i}=c_{ij}a^{2^j}\left(\sum_{\ell=1}^4 x_\ell\right)^{2^i}=0$. Therefore, we have $f(x_1+a)+f(x_2+a)+f(x_3+a)+f(x_4+a)=0$. Hence, for each $a \in \Ftn$, $\{x_1+a,x_2+a,x_3+a,x_4+a\} \in \VB_{n,f}$. Namely, $\VB_{n,f}$ contains all $2^{n-2}$ cosets of the flat $\{x_1,x_2,x_3,x_4\}$. Indeed, $\VB_{n,f}$ consists of a collection of $2$-dimensional vector spaces and all their cosets, which implies our claim.

(2) By the first part, $\{a,x_1+a,x_2+a,x_1+x_2+a\} \in \VB_{n,f}$ if and only if $\{0,x_1,x_2,x_1+x_2\} \in \VB_{n,f}$. Note that
$$
f(x_1+x_2)=f(x_1)+f(x_2)+\sum_{0 \le i<j <n} c_{ij}\left(x_1^{2^i}x_2^{2^j}+x_1^{2^j}x_2^{2^i}\right).
$$
Hence, $\{0,x_1,x_2,x_1+x_2\} \in \VB_{n,f}$ if and only if $\sum_{0 \le i<j <n} c_{ij}\left(x_1^{2^i}x_2^{2^j}+x_1^{2^j}x_2^{2^i}\right)=0$.
\end{proof}


Let $f(x)=\sum_{0 \le i<j<n}c_{ij}x^{2^i+2^j}$ be a DO polynomial over $\Ftn$. For $a \in \Ftn^*$, we define a linearized polynomial
$$
L_{f,a}(x)=\sum_{0 \le i<j \le n-1}c_{ij} \left(a^{2^i}x^{2^j}+a^{2^j}x^{2^i}\right).
$$
For such a DO polynomial $f$, we next define the multiset
\begin{equation}
\label{eqn-Rf}
R_f=\left[ \rank(L_{f,a}) \mid a \in \Ftn^* \right].
\end{equation}
The following theorem describes how to determine the number of vanishing flats for a DO polynomial.

\begin{theorem}
\label{thm-vanishingnumDO}
Let $f$ be a DO polynomial. Then $\displaystyle |\VB_{n,f}|=\frac{2^{n-2}}{3}\sum_{h \in R_f}\left(2^{n-h-1}-1\right)$.
\end{theorem}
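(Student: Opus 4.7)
The plan is to reduce the enumeration of vanishing flats to an enumeration of $2$-dimensional vanishing subspaces (those containing $0$), and then count those by studying the kernels of the linearized polynomials $L_{f,a}$.

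First I would invoke Theorem~\ref{thm-cosetstrucDO}(1): since $f$ is DO, every vanishing flat $\{x_1,x_2,x_3,x_4\}$ appears together with all $2^{n-2}$ of its cosets, so $\VB_{n,f}$ is a disjoint union of coset classes of $2$-dimensional vanishing subspaces. Writing $N_0$ for the number of $2$-dimensional vanishing subspaces (equivalently, vanishing flats containing $0$), this gives $|\VB_{n,f}| = 2^{n-2} N_0$, and it remains to compute $N_0$.

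Next I would parametrize these subspaces. A $2$-dimensional vanishing subspace has the form $\{0,x_1,x_2,x_1+x_2\}$ for nonzero distinct $x_1,x_2 \in \Ftn$ with $x_1+x_2 \ne 0$ (automatic since $x_1 \ne x_2$). By Theorem~\ref{thm-cosetstrucDO}(2) applied with $a=0$, this set lies in $\VB_{n,f}$ if and only if
\[
\sum_{0\le i<j<n} c_{ij}\bigl(x_1^{2^i}x_2^{2^j}+x_1^{2^j}x_2^{2^i}\bigr)=0,
\]
which is precisely the condition $L_{f,x_1}(x_2)=0$. So I will count the ordered pairs
\[
\cS=\bigl\{(x_1,x_2)\in\Ftn^*\times\Ftn^* : x_1\ne x_2,\ L_{f,x_1}(x_2)=0\bigr\}.
\]
Each unordered $2$-dimensional vanishing subspace $\{0,x_1,x_2,x_1+x_2\}$ contributes $6$ ordered pairs to $\cS$ (choose the first of the three nonzero elements, then the second), so $|\cS|=6N_0$.

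Finally, I would evaluate $|\cS|$ by fixing $x_1\in\Ftn^*$ and counting the admissible $x_2$. Since $L_{f,x_1}$ is $\F_2$-linear with rank $h=\rank(L_{f,x_1})$, its kernel has size $2^{n-h}$. The only subtle point, and the key small lemma, is that both $0$ and $x_1$ itself lie in this kernel: $L_{f,x_1}(x_1)=\sum c_{ij}(x_1^{2^i}x_1^{2^j}+x_1^{2^j}x_1^{2^i})=0$ by characteristic two. After excluding these two forbidden values of $x_2$, the count of admissible $x_2$ is $2^{n-h}-2$. Summing over $x_1$ and using the multiset $R_f$ from \eqref{eqn-Rf},
\[
6N_0=|\cS|=\sum_{x_1\in\Ftn^*}\bigl(2^{n-\rank(L_{f,x_1})}-2\bigr)=\sum_{h\in R_f}\bigl(2^{n-h}-2\bigr),
\]
so $N_0=\tfrac{1}{3}\sum_{h\in R_f}(2^{n-h-1}-1)$ and multiplying by $2^{n-2}$ yields the claimed formula. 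The only potential obstacle is verifying that the forbidden kernel elements are exactly $\{0,x_1\}$ and nothing more generically; but this follows because any other kernel element gives a genuine new $x_2$, and we have already accounted for the $6$-fold overcount from the symmetry of the flat.
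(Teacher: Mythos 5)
Your proposal is correct, but it takes a genuinely different route from the paper's proof. The paper works through the differential spectrum: since $f(x+a)+f(x)=L_{f,a}(x)+f(a)$ is affine with linear part of rank $h_a$, each value in its image is attained exactly $2^{n-h_a}$ times, hence $\sum_{b}\binom{\de_f(a,b)/2}{2}=2^{h_a}\binom{2^{n-h_a-1}}{2}=2^{n-2}\left(2^{n-h_a-1}-1\right)$, and the theorem follows by substituting this into the general spectrum-to-flats formula of Theorem~\ref{thm-diffspecVB}, whose built-in factor $\frac{1}{3}$ absorbs the overcounting. You bypass the spectrum entirely: you invoke the coset structure of Theorem~\ref{thm-cosetstrucDO}(1) to write $|\VB_{n,f}|=2^{n-2}N_0$, identify vanishing $2$-dimensional subspaces with the kernel condition $L_{f,x_1}(x_2)=0$ via Theorem~\ref{thm-cosetstrucDO}(2), and count ordered pairs with a $6$-fold symmetry factor, using the key observation that $\{0,x_1\}\subseteq\Ker(L_{f,x_1})$ in characteristic two (which also shows $\rank(L_{f,a})\le n-1$, so every term $2^{n-h-1}-1$ is nonnegative). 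Your bookkeeping $6N_0=\sum_{h\in R_f}\left(2^{n-h}-2\right)$ is sound, including the claim that all six ordered pairs of generators of a vanishing subspace satisfy the kernel condition (by symmetry and linearity of $L_{f,x_1}$). What each approach buys: the paper's route yields as a by-product the per-direction differential spectrum of DO polynomials ($\de_f(a,b)\in\{0,2^{n-h_a}\}$), connecting the theorem to the earlier spectrum machinery; your route is more self-contained combinatorially and makes the subspace-and-coset structure of $\VB_{n,f}$ explicit, which is precisely the structure exploited later in Theorem~\ref{thm-coverDO}. One cosmetic point: Theorem~\ref{thm-cosetstrucDO} is stated for non-APN DO polynomials, but relying on it is harmless, since for an APN $f$ every $L_{f,a}$ has rank $n-1$ and both sides of the claimed formula vanish.
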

\begin{proof}
For $a \in \Ftn^*$, we have $f(x+a)+f(x)=L_{f,a}(x)+f(a)$. Hence, to determine the differential spectrum of $f$, it suffices to consider the multiset $[L_{f,a}(x)+f(a) \mid x \in \Ftn]$. Define $h_a=\rank(L_{f,a})$, then the multiset $\left[ L_{f,a}(x)+f(a) \mid x \in \Ftn \right]$ contains $2^{h_a}$ elements each with multiplicity $2^{n-h_a}$. Thus, $\sum_{b \in \Ftn} \binom{\de_f(a,b)/2}{2}=2^{h_a}\binom{2^{n-h_a}/2}{2}=2^{n-2}\left(2^{n-h_a-1}-1\right)$. Hence, by Theorem~\ref{thm-diffspecVB}, we have $|\VB_{n,f}|=\frac{1}{3}\sum_{{\substack a \in \Ftn^* \\ b \in \Ftn}} \binom{\de_f(a,b)/2}{2}=\frac{2^{n-2}}{3}\sum_{h \in R_f}\left(2^{n-h-1}-1\right)$.
\end{proof}

\section{Partitioning vector spaces into disjoint equidimensional affine subspaces via vanishing flats}
\label{sec4}

Given a vector space over a finite field, a long-standing problem is to find a partition of that vector space into a collection of equidimensional vector subspaces, see for instance~\cite{B,HKK}. As an example, for a vector space $V=\Fq^{2n}$ over $\Fq$, a collection of $q^n+1$ subspaces of $V$ with dimension $n$, which contains each nonzero element of $V$ exactly once, is the well-known configuration named \emph{spread}, which has been intensively studied in finite geometry (see~\cite{LP}, for example). A perhaps less known partition was proposed by Baum and Neuwirth~\cite{BN75}, which aims to partition a vector space over a finite field into a collection of disjoint affine subspaces with equal dimension. In this section, we present an application of vanishing flats which generates this kind of partition in an elegant way. We call an $r$-dimensional affine subspace an $r$-flat. The following is a definition of~\cite{BN75}.

\begin{definition}
Let $n \ge 2$ and $1 \le d \le n-1$. A {\em cover} of $\F_{2}^n$ with dimension $d$ is a partition of $\F_{2}^n$ into $2^{n-d}$ disjoint affine subspaces each having dimension $d$. A $d$-dimensional linear subspace of $\F_2^n$ and all its cosets forms a {\em trivial cover}.
\end{definition}

Let $\{A_i\}_{i=1}^{2^{n-d}}$ and $\{B_i\}_{i=1}^{2^{n-d}}$ be two covers of $\F_2^n$ with dimension $d$. They are \emph{equivalent} if there exists an affine permutation $T$, such that $T(A_i)=B_i$, where the order of the affine subspaces in $\{B_i\}_{i=1}^{2^{n-d}}$ can be adjusted if necessary. Clearly, two trivial covers with the same dimension are equivalent. A major question is, are there any covers inequivalent to the trivial ones? In~\cite{BN75}, Baum and Neuwirth proposed to classify all covers up to equivalence, but that seems to be elusive, so far. On the other hand, they made considerable progress along this direction and in particular, they provided constructions of nontrivial covers. To describe their results, we need a few more concepts.

For an affine subspace $A_i$, we use $A_i^*$ to denote the associated linear subspace. A cover $\{A_i\}_{i=1}^{2^{n-d}}$ of $\F_2^n$ is called \emph{nonparallel} if $A_i^* \ne A_j^*$ for every distinct $i$ and $j$. A cover $\{A_i\}_{i=1}^{2^{n-d}}$ of $\F_2^n$ is called \emph{totally skew} if $A_i^* \cap A_j^*=\{0\}$ for every distinct $i$ and $j$. In \cite{BN75}, the following nontrivial covers were obtained.

\begin{result}\label{res-nonpara}
For each $n \ge 3$, there exists a nonparallel cover of $\F_2^n$ with dimension $d$, where $1 \le d \le n-2$. Note that each cover of $\F_2^n$ with dimension $(n-1)$ necessarily consists of two parallel $(n-1)$-flats. Hence, there exists a nonparallel cover of $\F_2^n$ with dimension $d$ if and only if $1 \le d \le n-2$.
\end{result}

\begin{result}\label{res-skew}
For each $n \ge 3$, there exists a totally skew cover of $\F_2^n$ with dimension $d$, where $1 \le d \le \frac{n-1}{2}$.
\end{result}

%
%


In this section, we employ DO monomials to supply new constructions of totally skew covers. Again, we use the finite field $\Ftn$ to represent the vector space $\F_2^n$. Given a polynomial $f$ over $\Ftn$ and $A=\{x_1,x_2,x_3,x_4\} \in \VB_{n,f}$, define $f(A)=\{ f(x_1), f(x_2), f(x_3), f(x_4)\}$. Moreover, for a permutation polynomial $f$ and a subset $\cA \subset \VB_{n,f}$, we define
\begin{align*}
f(\cA)=\{ \{ f(x_1), f(x_2),  f(x_3), & f(x_4) \} \mid \\
                                      &\{ x_1, x_2, x_3, x_4\} \in \cA\}.
\end{align*}
The next theorem outlines the strategy of generating covers by using permutation polynomials and their vanishing flats.

\begin{theorem}
\label{thm-covergene}
Let $f$ be a permutation polynomial over $\Ftn$. Suppose $\cA \subset \VB_{n,f}$ is a cover of $\Ftn$ with dimension $2$. Then, so is $f(\cA)$.
\end{theorem}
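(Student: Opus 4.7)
The plan is to unwind both conditions in the definition of a cover—namely, that the four-element sets in $f(\cA)$ are genuine $2$-flats, and that they partition $\Ftn$—and verify each one using the vanishing-flat condition plus the fact that $f$ is a bijection.

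First I would fix $A=\{x_1,x_2,x_3,x_4\}\in\cA\subset\VB_{n,f}$ and observe that the four images $f(x_1),f(x_2),f(x_3),f(x_4)$ are (i) pairwise distinct, because $f$ is a permutation, and (ii) sum to zero, because $A$ is a vanishing flat. Since any four distinct elements of $\Ftn$ whose sum is $0$ form a $2$-dimensional flat (this is precisely the defining property of the block set $\cB_n$; translating by $f(x_1)$ gives three distinct nonzero elements, any two of which sum to the third, hence together with $0$ they form a $2$-dimensional $\F_2$-subspace), we conclude $f(A)\in\cB_n$, and in fact $f(A)\in\VB_{n,f^{-1}}$, although only the $2$-flat property is needed here.

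Next I would check the partition property. Since $\cA$ is a cover of $\Ftn$ with dimension~$2$, we have $|\cA|=2^{n-2}$, and the sets in $\cA$ are pairwise disjoint with union $\Ftn$. Because $f$ is a bijection on $\Ftn$, the family $\{f(A)\mid A\in\cA\}$ is likewise pairwise disjoint (distinct $A,A'$ give $f(A)\cap f(A')=f(A\cap A')=\emptyset$), consists of $2^{n-2}$ distinct four-element sets, and has union $f(\Ftn)=\Ftn$. Combined with the previous paragraph, each $f(A)$ is a $2$-flat, so $f(\cA)$ is a partition of $\Ftn$ into $2^{n-2}$ disjoint $2$-flats, i.e., a cover of dimension $2$.

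There is really no hard step: the only mild observation that needs to be flagged explicitly is the equivalence ``four distinct elements of $\Ftn$ summing to $0$ $\Longleftrightarrow$ a $2$-flat'', but this is exactly how $\cB_n$ was defined at the start of Section~\ref{sec2}. I would keep the write-up to a short paragraph, emphasizing that the vanishing-flat hypothesis supplies the algebraic identity $f(x_1)+f(x_2)+f(x_3)+f(x_4)=0$ while the permutation hypothesis supplies both injectivity (distinct images) and surjectivity (the union covers $\Ftn$).
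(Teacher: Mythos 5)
Your proof is correct and follows exactly the same route as the paper's (two-sentence) proof: the vanishing-flat condition plus injectivity of $f$ makes each $f(A)$ a $2$-flat, and bijectivity of $f$ transports the partition property. You simply spell out the details that the paper leaves implicit, including the useful observation that four distinct elements summing to zero form a $2$-flat by the definition of $\cB_n$.
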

\begin{proof}
The property of vanishing flats implies that $f(\cA)$ is a collection of $2$-dimensional affine subspaces. The permutation property ensures that $f(\cA)$ is a cover.
\end{proof}

The above theorem indicates that given a cover with dimension $2$ contained in the vanishing flats of a permutation polynomial, a new cover can be derived by applying the permutation. Naturally, the most interesting permutations are the DO permutations, whose vanishing flats are formed by a union of trivial covers with dimension $2$:

\begin{theorem}
\label{thm-coverDO}
Let $f$ be a DO permutation over $\Ftn$, which is not APN. Then $\VB_{n,f}$ is a disjoint union of trivial covers $\cA_i$, $\displaystyle 1 \le i \le \frac{1}{3}\sum_{h \in R_f}\left(2^{n-h-1}-1\right)$, with dimension $2$,
where
$$
\cA_i=\{ \{ c,x_i+c,y_i+c,x_i+y_i+c \} \mid c \in \Ftn \}
$$
and $R_f$ is defined in \eqref{eqn-Rf}. The set $f(\cA_i)$ is a cover with dimension $2$ for each $i$. In addition, $f(\cA_i)$ is totally skew if and only if the following hold$:$
\begin{itemize}
\item[$(1)$] $\de_f(x_i)=\de_f(y_i)=\de_f(x_i+y_i)=4$,
\item[$(2)$] $E_f(x_i)$, $E_f(y_i)$ and $E_f(x_i+y_i)$ are pairwise disjoint.
\end{itemize}
\end{theorem}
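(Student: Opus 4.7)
The plan is to address the three assertions in order, leaning heavily on Theorems~\ref{thm-cosetstrucDO} and~\ref{thm-vanishingnumDO} for the first and Theorem~\ref{thm-covergene} for the second, and doing the real work in the third.

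First, I would handle the structural decomposition of $\VB_{n,f}$. By Theorem~\ref{thm-cosetstrucDO}(1), the group $(\Ftn,+)$ acts by translation on $\VB_{n,f}$. Each orbit has size exactly $2^{n-2}$ (two translates of a fixed $2$-flat coincide precisely when the translation vector lies in its linear part), and such an orbit has the stated shape $\cA_i=\{\{c,x_i+c,y_i+c,x_i+y_i+c\}\mid c\in\Ftn\}$, which by definition is a trivial cover of $\Ftn$ with dimension $2$. The number of orbits is $|\VB_{n,f}|/2^{n-2}=\tfrac{1}{3}\sum_{h\in R_f}(2^{n-h-1}-1)$ by Theorem~\ref{thm-vanishingnumDO}. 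The second claim that $f(\cA_i)$ is a cover is then immediate from Theorem~\ref{thm-covergene}.

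For the totally skew criterion, I would first identify the linear part of each flat in $f(\cA_i)$ explicitly. Indexing the flats in $\cA_i$ by coset representatives $c$ of $\langle x_i,y_i\rangle$, the linear part of the image flat is
\[
V_c=\{0,\alpha_c,\beta_c,\alpha_c+\beta_c\},\qquad \alpha_c=L_{f,x_i}(c)+f(x_i),\ \beta_c=L_{f,y_i}(c)+f(y_i),
\]
using the DO identity $f(x+a)+f(x)=L_{f,a}(x)+f(a)$. Since $\{0,x_i,y_i,x_i+y_i\}\in\VB_{n,f}$, Theorem~\ref{thm-cosetstrucDO}(2) gives $L_{f,x_i}(y_i)=0$, and the underlying symmetric bilinear form vanishes on the diagonal in characteristic $2$, so $\langle x_i,y_i\rangle$ is contained in each of $\ker L_{f,x_i}$, $\ker L_{f,y_i}$, $\ker L_{f,x_i+y_i}$. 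Hence each of $\alpha_c,\beta_c,\alpha_c+\beta_c$ is constant on cosets of $\langle x_i,y_i\rangle$, and as $c$ ranges over the $2^{n-2}$ cosets, $\alpha_c,\beta_c,\alpha_c+\beta_c$ trace out the sets $E_f(x_i),E_f(y_i),E_f(x_i+y_i)$ (possibly with repetition).

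For ($\Leftarrow$), condition (1) upgrades these inclusions to equalities $\ker L_{f,a}=\langle x_i,y_i\rangle$ for $a\in\{x_i,y_i,x_i+y_i\}$, so the three coordinate maps $c\mapsto\alpha_c,\beta_c,\alpha_c+\beta_c$ become injective on cosets; combined with (2), every one of the nine possible coincidences that could produce a nonzero element of $V_c\cap V_{c'}$ is ruled out. For ($\Rightarrow$), failing (1) (say $\delta_f(x_i)>4$) forces $\alpha_c=\alpha_{c'}$ for some distinct cosets $c,c'$, an immediate obstruction to totally skew; failing (2), e.g.\ $E_f(x_i)\cap E_f(y_i)\neq\emptyset$, produces $\alpha_{c_1}=\beta_{c_2}$ for some $c_1,c_2$. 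The subtle point here, and the main obstacle I anticipate, is that one must exclude the case $c_1,c_2$ lying in the same coset of $\langle x_i,y_i\rangle$: in that case the equality reduces to $\alpha_{c_1}=\beta_{c_1}$, which forces $f(x_i+c_1)=f(y_i+c_1)$ and contradicts that $f$ is a permutation. Therefore $c_1$ and $c_2$ lie in distinct cosets and $V_{c_1}\cap V_{c_2}\neq\{0\}$, again contradicting totally skew. The remaining cross cases are symmetric, completing the equivalence.
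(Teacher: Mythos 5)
Your proposal is correct and takes essentially the same route as the paper: the decomposition and cover claims are cited from Theorems~\ref{thm-cosetstrucDO}, \ref{thm-vanishingnumDO} and \ref{thm-covergene} exactly as in the paper, and your analysis of coincidences among the derivative values $f(a+c)+f(c)$ (same-direction coincidences giving condition (1) via $\delta_f(a)=|\ker L_{f,a}|=4$, cross-direction coincidences giving condition (2)) mirrors the paper's split into its conditions (a) and (b). Your explicit treatment of the same-coset cross case, ruled out because $f$ is a permutation, is a point the paper handles only implicitly, but it is the same argument, not a different one.
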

\begin{proof}
The first part follows from Theorems~\ref{thm-cosetstrucDO}(1), \ref{thm-vanishingnumDO} and \ref{thm-covergene}. For distinct $c, d \in \Ftn$, write
\begin{align*}
A_i(c)&=\{c, x_i+c, y_i+c, x_i+y_i+c\} \in \cA_i,  \\
A_i(d)&=\{d, x_i+d, y_i+d, x_i+y_i+d\} \in \cA_i.
\end{align*}
Note that $f(\cA_i)$ is totally skew if and only if for every distinct $c,d \in \Ftn$, $f(A_i(c))^* \cap f(A_i(d))^*=\{0\}$. Namely,
$\{0, f(x_i+c)+f(c), f(y_i+c)+f(c), f(x_i+y_i+c)+f(c)\}$ and $\{0, f(x_i+d)+f(d), f(y_i+d)+f(d), f(x_i+y_i+d)+f(d)\}$ intersect trivially at $0$, for every distinct $c,d \in \Ftn$.  This is equivalent to the two conditions on the top of the next page.

\begin{figure*}[t]
{\flushleft (a) For each $c \in \Ftn$ and $d \in \Ftn \sm \{ c, x_i+c, y_i+c, x_i+y_i+c \}$,}
\begin{equation*}
\begin{aligned}
f(x_i+c)+f(c) &\ne f(x_i+d)+f(d), \\
f(y_i+c)+f(c) &\ne f(y_i+d)+f(d), \\
f(x_i+y_i+c)+f(c) &\ne f(x_i+y_i+d)+f(d).
\end{aligned}
\end{equation*}

{\flushleft (b) For each $c \in \Ftn$,}
\begin{equation*}
\begin{aligned}
&f(x_i+c)+f(c) \notin \{ f(y_i+d)+f(d), f(x_i+y_i+d)+f(d) \mid d \in \Ftn \sm \{ c, x_i+c, y_i+c, x_i+y_i+c \} \}, \\
&f(y_i+c)+f(c) \notin \{ f(x_i+d)+f(d), f(x_i+y_i+d)+f(d) \mid d \in \Ftn \sm \{ c, x_i+c, y_i+c, x_i+y_i+c \} \}, \\
&f(x_i+y_i+c)+f(c) \notin \{ f(x_i+d)+f(d), f(y_i+d)+f(d) \mid d \in \Ftn \sm \{ c, x_i+c, y_i+c, x_i+y_i+c \} \}.
\end{aligned}
\end{equation*}
\hrulefill
\end{figure*}

Noting that $f(c)+f(x_i+c)+f(y_i+c)+f(x_i+y_i+c)=0$, we observe that Condition (a) is equivalent to $\de_f(x_i)=\de_f(y_i)=\de_f(x_i+y_i)=4$ and Condition (b) is equivalent to $E_f(x_i)$, $E_f(y_i)$ and $E_f(x_i+y_i)$ being pairwise disjoint.
\end{proof}

By restricting to DO monomials, we expect more specific information about the associated covers.  We note that each DO monomial is CCZ-equivalent to a Gold function. In view of Theorem~\ref{thm-CCZinvariant}, we only need to consider covers derived from Gold functions. For $f(x)=x^{2^t+1}$ over $\Ftn$, $f$ is a permutation if and only if $(2^t+1,2^n-1)=1$ and is not APN if and only if $(n,t)>1$. We note that $(2^t+1,2^n-1)=1$ is equivalent to $\frac{n}{(n,t)}$ being odd. As a preparation, we have the following crucial lemma which follows from \cite[Theorem 3]{CH}.

\begin{lemma}
\label{lem-numsol}
Let $x \in \Ftn$. Suppose $s=(t,n)$ and $z \in \Fts^*$. Then the equation
$$
x^{2^t}+x=z
$$
has no solution in $\Ftn$ if and only if $\frac{n}{(n,t)}$ is odd. Moreover, suppose that $\frac{n}{(n,t)}$ is odd, then for $x,y \in \Ftn$ and $z_1, z_2 \in \Fts$, the equation
\begin{equation}
\label{eqn-numsol}
x^{2^t}+x+z_1=y^{2^t}+y+z_2
\end{equation}
holds if and only if $z_1=z_2$ and $x,y$ belong to the same additive coset of $\Fts$ in $\Ftn$.
\end{lemma}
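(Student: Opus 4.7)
The plan is to analyze the $\F_2$-linear map $L:\Ftn\to\Ftn$ defined by $L(x)=x^{2^t}+x$ and identify its image via the relative trace. First I would observe that $\ker L=\{x\in\Ftn:x^{2^t}=x\}=\F_{2^{(n,t)}}=\Fts$, so by rank--nullity $|\mathrm{Im}(L)|=2^{n-s}$. Next, since $s\mid t$, the relative trace $\Tr^n_s$ is invariant under the Frobenius $x\mapsto x^{2^s}$, and $2^t$ is a power of $2^s$, so
$$\Tr^n_s\bigl(L(x)\bigr)=\Tr^n_s(x^{2^t})+\Tr^n_s(x)=0.$$
Thus $\mathrm{Im}(L)\subseteq\ker\Tr^n_s$, and since $\ker\Tr^n_s$ also has size $2^{n-s}$, the inclusion is an equality.

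Then I would specialize to $z\in\Fts^*$: because $z$ is fixed by $x\mapsto x^{2^s}$,
$$\Tr^n_s(z)=\sum_{i=0}^{n/s-1}z^{2^{is}}=(n/s)\cdot z\quad\text{in }\Fts.$$
In characteristic $2$, $(n/s)\cdot z=0$ iff $n/s$ is even (as $z\neq 0$). Hence $L(x)=z$ is solvable iff $n/s$ is even; taking the contrapositive yields the first assertion.

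For the second assertion, assuming $n/s$ is odd, the equation $x^{2^t}+x+z_1=y^{2^t}+y+z_2$ rewrites in characteristic $2$ as
$$L(x+y)=z_1+z_2,\qquad z_1+z_2\in\Fts.$$
By the first part, the only element of $\Fts$ lying in $\mathrm{Im}(L)$ is $0$, so we must have $z_1+z_2=0$, i.e., $z_1=z_2$. Then $L(x+y)=0$ forces $x+y\in\ker L=\Fts$, which is exactly the statement that $x$ and $y$ lie in the same additive coset of $\Fts$ in $\Ftn$. The converse is immediate since $x+y\in\Fts$ gives $L(x+y)=0$.

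No step is really an obstacle: the argument is standard Hilbert~90 style, with the only point worth double-checking being the identity $\Tr^n_s(z)=(n/s)\,z$ for $z\in\Fts$ used to detect solvability. This also matches the reference to \cite{CH}, which gives a slightly more general criterion of which our lemma is the specialization to $\F_2$-affine equations with right-hand side in $\Fts$.
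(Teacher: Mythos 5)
Your proof is correct, and for the first assertion it takes a genuinely different route from the paper's. The paper does not prove the solvability criterion at all: it invokes \cite[Theorem 3]{CH}, which states that $x^{2^t}+x=z$ has no root in $\Ftn$ if and only if $\sum_{i=0}^{n/(n,t)-1}z^{2^{ti}}\neq 0$, and then merely evaluates this sum for $z\in\Fts^*$ (since $z^{2^t}=z$, the sum collapses to $(n/s)\cdot z$, which is $z\neq 0$ exactly when $n/s$ is odd). You instead reprove the needed criterion from scratch by an additive Hilbert 90 argument: $\ker L=\Fts$, rank--nullity gives $|\mathrm{Im}(L)|=2^{n-s}$, Galois-invariance of the relative trace gives $\mathrm{Im}(L)\subseteq\ker \Tr^{2^n}_{2^s}$, and a size count upgrades the inclusion to equality. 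In fact your criterion and the cited one coincide: the subgroup of $\Z/n\Z$ generated by $t$ equals the one generated by $s=(n,t)$, so the exponents $ti \bmod n$, $0\le i<n/s$, are a permutation of $si$, $0\le i<n/s$, and the Coulter--Henderson sum is precisely $\Tr^{2^n}_{2^s}(z)$. What your version buys: it is self-contained, it identifies $\mathrm{Im}(L)$ for \emph{all} $z\in\Ftn$ (not just $z\in\Fts$), and it makes both directions of the ``if and only if'' explicit, whereas the paper's write-up only spells out that oddness of $n/s$ forces the sum to be nonzero, leaving the even case implicit in the same computation. What the paper's version buys is brevity. The second assertion is handled identically in both proofs: rewrite the equation as $L(x+y)=z_1+z_2\in\Fts$, use the first part to force $z_1=z_2$, and conclude $x+y\in\ker L=\Fts$; your extra remark that $\mathrm{Im}(L)\cap\Fts=\{0\}$ is exactly the bridge the paper leaves tacit.
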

\begin{proof}
 According to \cite[Theorem 3]{CH}, the equation $x^{2^t}+x=z$ has no solution in $\Ftn$ if and only if $\sum_{i=0}^{\frac{n}{(n,t)}-1} z^{2^{ti}} \ne 0$. Since $z \in \Fts^*$, $s=(t,n)$ and $\frac{n}{(n,t)}$ odd, we have $\sum_{i=0}^{\frac{n}{(n,t)}-1} z^{2^{ti}}=z \ne 0$ and the first part follows. For the second part, the sufficiency is clear. Note that $x^{2^t}+x+z_1=y^{2^t}+y+z_2$ is equivalent to $(x+y)^{2^t}+(x+y)=z_1+z_2$. Since $\frac{n}{(n,t)}$ is odd, then~\eqref{eqn-numsol} holds only if $z_1=z_2$, which forces $x,y$ being in the same additive coset of $\Fts$ in $\Ftn$.
\end{proof}

Now we proceed to consider DO monomials.

\begin{corollary}\label{cor-coverGold}
Let $f(x)=x^{2^t+1}$ be over $\Ftn$, where $1 \le t \le n-1$, $(2^t+1,2^n-1)=1$ and $s=(n,t)>1$. Then we have
\begin{itemize}
\item[$(1)$] $\VB_{n,2^t+1}$ is a disjoint union of trivial covers $\left\{\cA_i \mid 1 \le i \le \frac{(2^{s-1}-1)(2^n-1)}{3} \right\}$ with dimension $2$, where
    $$
      \cA_i=\left\{ \{ c,x_i+c,y_i+c,x_i+y_i+c \} \mid c \in \Ftn \right\},
    $$
    for $x_i, y_i \in \Ftn^*$ and $\frac{x_i}{y_i} \in \Fts \sm \{0,1\}$. For each $i$, $f(\cA_i)$ is a cover with dimension $2$.
\item[$(2)$] For $c \in \Ftn$, write
    $$
      A_i(c)=\{c, x_i+c, y_i+c, x_i+y_i+c\} \in \cA_i,
    $$
    where $x_i, y_i \in \Ftn^*$ and $\frac{x_i}{y_i} \in \Fts \sm \{0,1\}$. Write $x_i=\al z_i$ and $y_i=\al w_i$ for some $\al \in \Ftn^*$ and distinct $z_i, w_i \in \Fts^*$. Let $b_{\al,j}+\al\Fts$, $1 \le j \le 2^{n-s}$, be $2^{n-s}$ additive cosets of $\al\Fts$ in $\Ftn$.
    Then $f(A_i(c))$ is parallel to $f(A_i(d))$ if and only if $c,d \in b_{\al,j}+\al\Fts$, for some $j$. Otherwise, $f(A_i(c))^* \cap f(A_i(d))^*=\{0\}$. In particular, $f(\cA_i)$ can be partitioned into $2^{n-s}$ subsets $\bigcup_{c \in b_{\al,j}+\al\Fts}f(A_i(c))$, $1 \le j \le 2^{n-s}$, where each subset contains $2^{s-2}$ parallel $2$-flats and for every pair of $2$-flats from distinct subsets, their associated linear subspace intersect trivially. Consequently, for each $i$, $f(\cA_i)$ is a totally skew cover with dimension $2$ if $s=2$ and is a nontrivial cover with dimension $2$ having parallel $2$-flats, if $s>2$.
\end{itemize}
\end{corollary}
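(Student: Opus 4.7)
The plan is to dispatch Part~(1) quickly from earlier results and then work out Part~(2) via an explicit derivative computation for the Gold function. For Part~(1), Theorem~\ref{thm-PQSGold} decomposes $\VB_{n,f}$ as $\bigcup_{a,x} U_{a,x}$, where each $U_{a,x}$ is the collection of all translates of the $2$-flat $V_{a,x}$ by elements of $\Ftn$. Each $U_{a,x}$ is therefore already a trivial $2$-dimensional cover of $\Ftn$; setting $x_i=x$ and $y_i=ax$ presents it in the form $\cA_i$ of the statement, with $y_i/x_i=a\in\Fts\setminus\{0,1\}$. The count $\frac{(2^{s-1}-1)(2^n-1)}{3}$ is the total block count from Theorem~\ref{thm-PQSGold} divided by $2^{n-2}$ (the number of blocks per trivial cover), and Theorem~\ref{thm-covergene} guarantees that each $f(\cA_i)$ is itself a cover.

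For Part~(2), the argument centres on the Gold-derivative identity $f(v+e)+f(v)=v\,e^{2^t}+v^{2^t}e+e^{2^t+1}$. Writing $\phi_e(v)=v\,e^{2^t}+v^{2^t}e$, the linear subspace $f(A_i(c))^*$ is spanned by the three values $\phi_e(c)+e^{2^t+1}$ as $e$ ranges over $\{x_i,y_i,x_i+y_i\}$. Comparing $f(A_i(c))^*$ with $f(A_i(d))^*$ splits into nine pairings, grouped into three ``same-direction'' and six ``cross-direction'' cases. For a same-direction equality, $\phi_{x_i}(c)=\phi_{x_i}(d)$ with $c\neq d$ rearranges to $(c+d)/x_i$ being a $(2^t-1)$-th root of unity; since $\gcd(2^t-1,2^n-1)=2^s-1$, this forces $c+d\in x_i\Fts = a\Fts$. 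Because $x_i,\ y_i,\ x_i+y_i$ all lie in $a\Fts^*$, the same criterion governs the three same-direction equalities simultaneously, producing parallelism precisely when $c+d\in a\Fts$.

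The cross-direction case is the main obstacle and is where Lemma~\ref{lem-numsol} enters. Writing $x_i=az_i$, $y_i=aw_i$ with distinct $z_i,w_i\in\Fts^*$ and substituting $c=au$, $d=av$, the equation $f(x_i+c)+f(c)=f(y_i+d)+f(d)$ simplifies, using $z_i^{2^t}=z_i$ and dividing by $a^{2^t+1}$, to $(z_iu+w_iv)^{2^t}+(z_iu+w_iv)=(z_i+w_i)^2\in\Fts^*$. By Lemma~\ref{lem-numsol}, and since the permutation hypothesis $(2^t+1,2^n-1)=1$ is equivalent to $\frac{n}{s}$ being odd, this has no solution in $\Ftn$. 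The remaining five cross-pairings among $\{x_i,y_i,x_i+y_i\}$ reduce analogously to $W^{2^t}+W=\mu$ with $\mu\in\{z_i^2,w_i^2,(z_i+w_i)^2\}\subset\Fts^*$, so all six cross-equalities are ruled out. Hence $c+d\notin a\Fts$ implies $f(A_i(c))^*\cap f(A_i(d))^*=\{0\}$.

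Finally, the structural consequences are combinatorial. A transversal of $V_i=\mathrm{span}_{\F_2}\{x_i,y_i\}\subset a\Fts$ for $\cA_i$ distributes across the $2^{n-s}$ additive cosets of $a\Fts$ in $\Ftn$, with $|a\Fts|/|V_i|=2^{s-2}$ representatives per coset. Within one coset the flats are pairwise parallel; across distinct cosets their associated linear subspaces meet only at $0$. For $s=2$ each group is a singleton, so $f(\cA_i)$ is totally skew; for $s>2$ each group contains $2^{s-2}\ge 2$ parallel flats and is hence not totally skew, yet the $2^{n-s}\ge 2$ distinct groups prevent all $2^{n-2}$ flats from being parallel, so the cover is also not trivial.
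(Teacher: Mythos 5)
Your proposal is correct and takes essentially the same route as the paper: Part (1) by combining Theorem~\ref{thm-PQSGold} with the cover-generation machinery (Theorem~\ref{thm-covergene}), and Part (2) by writing $x_i=\al z_i$, $y_i=\al w_i$ with $z_i,w_i\in\Fts^*$, computing the direction subspaces $f(A_i(c))^*$ explicitly, and using Lemma~\ref{lem-numsol} to decide which elements can coincide, followed by the same $2^{s-2}$-flats-per-coset count. The only cosmetic difference is organizational: you dispose of the same-direction coincidences by the elementary $\gcd(2^t-1,2^n-1)=2^s-1$ root-of-unity argument and invoke only the first (no-solution) part of Lemma~\ref{lem-numsol} for the six cross-direction coincidences, whereas the paper applies the second part of that lemma uniformly to both cases; the resulting parallel/skew dichotomy and the conclusions for $s=2$ versus $s>2$ are identical.
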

\begin{proof}
We only need to show Part (2), as Part (1) follows from Theorems~\ref{thm-PQSGold}, \ref{thm-cosetstrucDO}, and \ref{thm-coverDO}. Note that $x_i=\al z_i$, $y_i=\al w_i$, where $z_i,w_i \in \Fts^*$ are distinct. Since $z_i^{2^t}=z_i$ and $w_i^{2^t}=w_i$, we can compute $f(A_i(c))^*$ and $f(A_i(d))^*$ in the middle of the next page.

\begin{figure*}[t]
\begin{align*}
f(A_i(c))^*&=\{0, f(x_i+c)+f(c), f(y_i+c)+f(c), f(x_i+y_i+c)+f(c)\} \\
           &=\left\{0, x_ic^{2^t}+x_i^{2^t}c+x_i^{2^t+1}, y_ic^{2^t}+y_i^{2^t}c+y_i^{2^t+1}, (x_i+y_i)c^{2^t}+(x_i+y_i)^{2^t}c+(x_i+y_i)^{2^t+1} \right \} \\
           &=\left\{0, \al z_i^{2^t} c^{2^t}+\al^{2^t}z_ic+\al^{2^t+1}z_i^2, \al w_i^{2^t} c^{2^t}+\al^{2^t}w_ic+\al^{2^t+1}w_i^2\right., \\
           & \quad \quad \quad \quad \quad \quad \quad \quad \quad\left. \al (z_i+w_i)^{2^t} c^{2^t}+\al^{2^t}(z_i+w_i)c+\al^{2^t+1}(z_i+w_i)^2 \right\} \\
           &=\Bigg\{0, \al^{2^t+1}\left(\left(\frac{z_ic}{\al}\right)^{2^t}+\frac{z_ic}{\al}+z_i^2\right), \al^{2^t+1}\left(\left(\frac{w_ic}{\al}\right)^{2^t}+\frac{w_ic}{\al}+w_i^2\right), \\
         & \quad \quad \quad \quad \quad \quad \quad \quad \quad \al^{2^t+1}\left(\left(\frac{(z_i+w_i)c}{\al}\right)^{2^t}+\frac{(z_i+w_i)c}{\al}+(z_i+w_i)^2\right) \Bigg\}
\end{align*}
and similarly,
\begin{align*}
f(A_i(d))^*&=\Bigg\{0, \al^{2^t+1}\left(\left(\frac{z_i d}{\al}\right)^{2^t}+\frac{z_i d}{\al}+z_i^2\right), \al^{2^t+1}\left(\left(\frac{w_i d}{\al}\right)^{2^t}+\frac{w_i d}{\al}+w_i^2\right) \\
         & \quad \quad \quad \quad \quad \quad \quad \quad \quad \al^{2^t+1}\left(\left(\frac{(z_i+w_i)d }{\al}\right)^{2^t}+\frac{(z_i+w_i) d}{\al}+(z_i+w_i)^2\right) \Bigg\}.
\end{align*}
\hrulefill
\end{figure*}

Since $z_i, w_i \in \Fts^*$ are distinct, by Lemma~\ref{lem-numsol}, $|f(A_i(c))^* \cap f(A_i(d))^*| \ge 2$ if and only if $\left(\frac{\be c}{\al}\right)^{2^t}+\frac{\be c}{\al}+\be^2=\left(\frac{\be d}{\al}\right)^{2^t}+\frac{\be d}{\al}+\be^2$ for some $\be \in \{ z_i, w_i, z_i+w_i \}$. Each of these three cases is equivalent to $\left(\frac{c+d}{\al}\right)^{2^t}=\frac{c+d}{\al}$. Hence, $f(A_i(c))^*=f(A_i(d))^*$ if and only if $c,d \in b_{\al,j}+\al\Fts$, for some $j$. Otherwise, $f(A_i(c))^* \cap f(A_i(d))^*=\{0\}$. Consequently, $f(\cA_i)$ can be partitioned into $2^{n-s}$ subsets $\bigcup_{1 \le j \le 2^{n-s}} (\bigcup_{c \in b_j+\al\Fts} f(A_i(c)))$, where for every pair of $2$-flats from distinct subsets, their associated linear spaces intersect trivially. Note that $f(A_i(c))=f(A_i(d))$ if and only if $c+d \in \{ 0,x_i,y_i,x_i+y_i \}$. Thus, each subset $\bigcup_{c \in b_j+\al\Fts} f(A_i(c))$ contains $\frac{2^s}{4}=2^{s-2}$ parallel $2$-flats.
\end{proof}

Consequently, we have the following necessary and sufficient condition determining which DO monomial leads to totally skew covers.

\begin{corollary}\label{cor-skewcoverGold}
Let $f(x) = x^{2^t+1}$ be over $\Ftn$, where $(2^t+1,2^n-1)=1$ and  $s=(n,t)>1$. For each trivial cover
$$
\cA_i=\{ \{ c,x_i+c,y_i+c,x_i+y_i+c \} \mid c \in \Ftn \},
$$
where $x_i, y_i \in \Ftn^*$ and $\frac{x_i}{y_i} \in \Fts \sm \{0,1\}$, the cover $f(\cA_i)$ is totally skew if and only if $n \equiv 2 \pmod 4$ and $s=2$. More precisely, let $\ga$ be a primitive element of $\Ftn$. Then each trivial cover $\cA_j=\{ \{ c+\ga^j\F_4 \} \mid c \in \Ftn\}$, where $0 \le j \le \frac{2^n-4}{3}$, gives a totally skew cover $f(\cA_j)$.
\end{corollary}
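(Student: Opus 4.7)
The plan is to read off everything from the detailed anatomy of $f(\cA_i)$ established in Corollary~\ref{cor-coverGold}(2). That statement partitions $f(\cA_i)$ into $2^{n-s}$ blocks, each consisting of $2^{s-2}$ pairwise parallel $2$-flats, while $2$-flats from distinct blocks have trivially intersecting associated linear subspaces. Being totally skew means every pair of distinct $2$-flats meets trivially in their linear parts, which is already guaranteed across different blocks, so the only obstruction is the presence of parallel $2$-flats inside a single block. Hence $f(\cA_i)$ is totally skew if and only if every block contains exactly one $2$-flat, i.e. $2^{s-2}=1$, i.e. $s=2$.

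Next I would combine $s=2$ with the standing permutation hypothesis $(2^t+1,2^n-1)=1$, which, as noted just before Lemma~\ref{lem-numsol}, is equivalent to $n/s$ being odd. With $s=2$ this forces $n/2$ to be odd, i.e. $n\equiv 2\pmod 4$. Conversely, when $n\equiv 2\pmod 4$ and $s=(n,t)=2$, both conditions $(2^t+1,2^n-1)=1$ and $2^{s-2}=1$ hold, so Corollary~\ref{cor-coverGold}(2) yields a totally skew cover. This settles the equivalence.

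For the explicit parameterization, I use the fact that when $s=2$ we have $\F_4=\Fts\subset \Ftn$, say $\F_4=\{0,1,\ze,\ze^2\}$ with $\ze=\ga^{(2^n-1)/3}$ a primitive cube root of unity. The hypothesis $x_i/y_i\in \Fts\sm\{0,1\}$ exactly says that $\{0,x_i,y_i,x_i+y_i\}$ is a $2$-dimensional $\F_2$-subspace of the form $\al\F_4$ for some $\al\in \Ftn^*$. Since $\al\F_4=\be\F_4$ iff $\al/\be\in \F_4^*$, there are exactly $(2^n-1)/3$ such subspaces, which I would list as $\ga^j\F_4$ for $0\le j\le (2^n-4)/3$. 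The trivial covers $\cA_j=\{c+\ga^j\F_4\mid c\in\Ftn\}$ therefore exhaust the trivial covers addressed in part~(1) of Corollary~\ref{cor-coverGold}, and by the equivalence proved above each $f(\cA_j)$ is totally skew.

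The only delicate point I anticipate is making sure the counts line up: Theorem~\ref{thm-PQSGold} and Corollary~\ref{cor-coverGold}(1) produce $(2^{s-1}-1)(2^n-1)/3$ trivial covers, and one should verify this reduces to $(2^n-1)/3$ when $s=2$, matching the $(2^n-4)/3+1=(2^n-1)/3$ subspaces $\ga^j\F_4$ listed above; this is a direct substitution and is the only bookkeeping step that needs care.
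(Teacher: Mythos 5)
Your proof is correct and takes essentially the same route as the paper: both directions of the equivalence are read off from Corollary~\ref{cor-coverGold}(2) (with $s=2$ forced by the $2^{s-2}$ parallel $2$-flats inside a block), and $n\equiv 2 \pmod 4$ is extracted from the standing permutation hypothesis. The only cosmetic difference is that the paper obtains $n\equiv 2\pmod 4$ by contradiction (if $4 \mid n$ then $t\equiv 2\pmod 4$ and $(2^t+1,2^n-1)=5$), whereas you directly invoke the equivalence of $(2^t+1,2^n-1)=1$ with $n/s$ being odd stated before Lemma~\ref{lem-numsol} --- the same underlying fact --- and your explicit parameterization by the subspaces $\ga^j\F_4$ is just a spelled-out version of what the paper dismisses with ``the rest follows immediately.''
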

\begin{proof}
We only need to prove the necessary and sufficient condition, as the rest follows immediately. The sufficiency follows from Corollary~\ref{cor-coverGold}(2). Conversely, if $f(\cA_i)$ is totally skew, by Corollary~\ref{cor-coverGold}(2), $s=(n,t)=2$. Assume $n \equiv 0 \pmod 4$. Then we have $t \equiv 2 \pmod 4$. In this case, $(2^t+1,2^n-1)=5$, which gives a contradiction. Hence, we must have $n \equiv 2 \pmod 4$.
\end{proof}

We observe that the concept of vanishing flats can be extended to higher dimensions. Let $f$ be over $\Ftn$, an $s$-flat in $\F_{2^n}$ is called a \emph{vanishing $s$-flat} of $f$, if its image under $f$ is again an $s$-flat. Clearly, a vanishing flat of $f$ is a vanishing $2$-flat. The following theorem generalizes Corollary~\ref{cor-skewcoverGold} and produces totally skew covers with higher dimensions.
For a subset $A \subset \Ftn$ and a function $f$ over $\Ftn$, we write $f(A)=\{f(a) \mid a \in A\}$.

\begin{theorem}\label{thm-coverhigh}
Let $f(x)=x^{2^t+1}$ be over $\Ftn$, with $(2^t+1,2^n-1)=1$ and $s=(n,t)>1$. For every $ \al \in \Ftn^*$, the set $\{ c+\al\Fts \mid c \in \Ftn \}$ forms a trivial cover of $\Ftn$ with dimension $s$, which is a vanishing $s$-flat of $f$. Moreover, $\{ f(c+\al\Fts) \mid c \in \Ftn \}$ is a totally skew cover of $\Ftn$ with dimension $s$.
\end{theorem}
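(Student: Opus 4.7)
The plan is to split the theorem into three pieces: (a) $\{c+\al\Fts \mid c \in \Ftn\}$ is a trivial cover of $\Ftn$ with dimension $s$; (b) each coset $c+\al\Fts$ is a vanishing $s$-flat of $f$, that is, $f(c+\al\Fts)$ is an $s$-flat; (c) for $c \not\equiv d \pmod{\al\Fts}$, the linear parts of $f(c+\al\Fts)$ and $f(d+\al\Fts)$ intersect only at $0$. Part (a) is immediate since $\al\Fts$ is an $\F_2$-subspace of $\Ftn$ of dimension $s$. Part (b) combined with the bijectivity of $f$ shows that $\{f(c+\al\Fts) \mid c \in \Ftn\}$ partitions $\Ftn$ into $s$-flats; together with (c) the theorem follows.

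For (b), set $u = \al^{-1}c$. Since $s \mid t$, every $z \in \Fts$ satisfies $z^{2^t}=z$. Expanding $(c+\al z)^{2^t+1}$ and subtracting $f(c)$ gives
\[
f(c+\al z)-f(c) \;=\; \al^{2^t+1}\bigl[(u^{2^t}+u)\,z + z^2\bigr] \;=:\; \phi_c(z).
\]
In characteristic $2$, the map $\phi_c:\Fts \to \Ftn$ is $\F_2$-linear, and it is injective because $f$ is a permutation and $\al \ne 0$. Therefore $\phi_c(\Fts)$ has $\F_2$-dimension $s$, so $f(c+\al\Fts) = f(c)+\phi_c(\Fts)$ is an $s$-flat.

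For (c), suppose $\phi_c(z)=\phi_d(z')$ with $z,z' \in \Fts$, and set $v = \al^{-1}d$. Dividing by $\al^{2^t+1}$ and using $z^{2^t}=z$, $z'^{2^t}=z'$ to rewrite $u^{2^t}z=(uz)^{2^t}$ and $v^{2^t}z'=(vz')^{2^t}$, one arrives at
\[
A^{2^t}+A \;=\; B^2, \qquad A := uz+vz' \in \Ftn, \quad B := z+z' \in \Fts.
\]
The hypothesis $(2^t+1,2^n-1)=1$ is equivalent to $n/s$ being odd. If $B \ne 0$ then $B^2 \in \Fts^*$ and Lemma~\ref{lem-numsol} forbids any such $A \in \Ftn$, a contradiction. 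Thus $B=0$, so $z=z'$ and $A=(u+v)z$; the equation becomes $A^{2^t}=A$, giving $A \in \F_{2^t}\cap\Ftn = \Fts$. If $z \ne 0$, this would force $u+v \in \Fts$, equivalently $c \equiv d \pmod{\al\Fts}$, contrary to assumption. Hence $z=z'=0$, which is exactly the totally skew condition. The main obstacle is precisely the reduction to the equation $A^{2^t}+A=B^2$ with $B\in\Fts$; once this reformulation is in place, Lemma~\ref{lem-numsol} closes the argument cleanly.
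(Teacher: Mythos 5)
Your proposal is correct and takes essentially the same route as the paper's own proof: the same expansion of $f(c+\al z)$ using $z^{2^t}=z$ to exhibit the linear part, the same permutation-based injectivity argument for the $s$-flat property, and the same reduction to the trinomial equation $A^{2^t}+A=B^2$ resolved by Lemma~\ref{lem-numsol}. The differences are only cosmetic (normalizing by $u=\al^{-1}c$, $v=\al^{-1}d$, and phrasing the skewness step as a contrapositive rather than assuming a nonzero common element).
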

\begin{proof}
Note that

\begin{align*}
 &f(c+\al\Fts) \\
=&\left\{ c^{2^t+1}+c^{2^t}\al z+c\al^{2^t}z^{2^t}+\al^{2^t+1}z^{2^t+1} \mid z \in \Fts \right\} \\
=&c^{2^t+1}+S_c,
\end{align*}
where
$$S_c=\left\{(c^{2^t}\al+c\al^{2^t})z+\al^{2^t+1}z^2 \mid z \in \Fts\right\}.$$
Clearly, $S_c$ is a linear subspace. Since $f$ is a permutation, the dimension of $S_c$ is $s$. Consequently, $f(c+\al\Fts)$ is an $s$-flat and hence, $c+\al\Fts$ is a vanishing $s$-flat, for each $c \in \Ftn$. To prove that $\{ f(c+\al\Fts) \mid c \in \Ftn \}$ is a totally skew cover, it suffices to show that $S_c=S_d$ if and only if $c+d \in \al\Fts$ and $S_c \cap S_d= \{0\}$, otherwise.

 Assume $S_c \cap S_d$ contains a nonzero element. Then there exist $z_1, z_2 \in \Fts^*$, such that
\begin{equation}\label{eqn-S}
(c^{2^t}\alpha+c\alpha^{2^t})z_1+\alpha^{2^t+1}z_1^2=(d^{2^t}\alpha+d\alpha^{2^t})z_2+\alpha^{2^t+1}z_2^2.
\end{equation}
Rewrite \eqref{eqn-S} as
$$
(c^{2^t}z_1+d^{2^t}z_2)\alpha+(cz_1+dz_2)\alpha^{2^t}+(z_1+z_2)^2\alpha^{2^t+1}=0.
$$
Noting that $z_1^{2^t}=z_1$ and $z_2^{2^t}=z_2$, we have
$$
(cz_1+dz_2)^{2^t}\alpha+(cz_1+dz_2)\alpha^{2^t}+(z_1+z_2)^2\alpha^{2^t+1}=0,
$$
and therefore,
$$
(\frac{cz_1+dz_2}{\alpha})^{2^t}+\frac{cz_1+dz_2}{\alpha}=(z_1+z_2)^2.
$$
Since $\frac{cz_1+dz_2}{\alpha} \in \Ftn$ and $z_1, z_2 \in \Fts^*$, by Lemma~\ref{lem-numsol}, the above equation holds only if $z_1=z_2$. By Equation~\eqref{eqn-S}, $z_1=z_2$ implies $(\frac{c+d}{\alpha})^{2^t}=\frac{c+d}{\alpha}$, and therefore, $c+d\in\al\Fts$. Conversely, if $c+d\in\al\Fts$, a direct computation shows $S_c=S_d$.
\end{proof}

\begin{remark}
Assume the conditions of Theorem~\textup{\ref{thm-coverhigh}} hold. Let $A$ be a $2$-dimensional subspace of $\al\Fts$. Suppose $\cA=\{A_i \mid 1 \le i \le 2^{n-2} \}$ is a trivial cover of $\Ftn$ formed by $A$ and all its cosets. By Corollary~\textup{\ref{cor-coverGold}(2)}, as a cover of $\Ftn$, $f(\cA)$ can be partitioned into $2^{n-s}$ subsets, where each subset contains $2^{s-2}$ parallel $2$-flats. By taking a union of all the parallel $2$-flats in each subset, we exactly recover the totally skew cover of $\Ftn$ with dimension $s$ derived in Theorem~\textup{\ref{thm-coverhigh}}.
\end{remark}

\section{Concluding remarks}\label{sec5}

In this paper, we investigated the planarity of a function from the viewpoint of vanishing flats. We showed that in some cases, the number of vanishing flats is a criterion that is finer than the differential uniformity and coarser than the differential spectrum. Moreover, the partial quadruple system formed by all the vanishing flats contains detailed information about $f$. For monomials and DO polynomials, we studied their number of vanishing flats and partial quadruple systems in more detail. As an application, we extended the concept of vanishing flats and proposed a new construction of totally skew covers using DO monomials.

We believe that the vanishing flats and the partial quadruple systems deserve further investigation. We propose next some further questions:
\begin{itemize}
\item[$(1)$] The lower bound in \eqref{eqn-VBf} is tight when $n$ is even, which is achieved by the inverse function. Are there any other monomials achieving this lower bound? Another natural question is to consider is whether~\eqref{eqn-VBf} is tight when $n$ is odd. If so, characterize the monomials achieving this lower bound.
\item[$(2)$] By Theorem~\ref{thm-cosetstrucDO}(1), the number of vanishing flats of each non-APN DO polynomial over $\Ftn$ is lower bounded by $2^{n-2}$. According to Theorem~\ref{thm-PQSGold}, DO monomials cannot achieve this lower bound. On the other hand, is there any DO polynomial having exactly $2^{n-2}$ vanishing flats?
\item[$(3)$] So far, we only know the partial quadruple systems associated with the inverse and Gold functions. It is interesting to determine the partial quadruple systems associated with other polynomials.
\item[$(4)$] Instead of the partial quadruple systems, an easier problem is to determine the number of vanishing flats associated with other polynomials. For instance, the open cases in Table~\ref{tab-VFexample} may serve as a starting point.
\item[$(5)$] In Corollary~\ref{cor-skewcoverGold}, the $\frac{2^n-1}{3}$ trivial covers correspond to $\frac{2^n-1}{3}$ totally skew covers. It is interesting to consider the equivalence problem of these totally skew covers. Moreover, we ask whether the totally skew covers derived from DO monomials are equivalent to those in \cite[Theorem 2]{BN75}.
\item[$(6)$] In Corollary~\ref{cor-coverGold}, the totally skew covers were derived from DO monomials whenever $n \equiv 2 \pmod 4$. On the other hand, there are some known DO permutations other than monomials \cite{BCHO}. Thus, we ask if any totally skew covers can be derived from these DO permutations, especially when $n \not\equiv 2 \pmod 4$.
\end{itemize}

\section*{Acknowledgement}

Shuxing Li is supported by the Alexander von Humboldt Foundation. Wilfried Meidl is supported by the FWF Project P 30966. This paper was started while the second, fifth and  sixth-named authors visited the fourth-named author at Otto von Guericke University Magdeburg in the spring of 2019. These authors would like to thank the host institution for the excellent working conditions.

\begin{IEEEbiographynophoto}{Shuxing Li}
received his Ph.D. degree in Mathematics from Zhejiang University, China, in 2016. From November 2014 to
July 2016, He was a research assistant at Department of Mathematics, The Hong Kong University of Science and Technology, Hong Kong. From September 2016 to September 2017, He was a postdoctoral fellow at Department of Mathematics, Simon Fraser University, Canada. He was an Alexander-von-Humboldt Postdoctoral Fellow from October 2017 to September 2019, at Faculty of Mathematics, Otto von Guericke University Magdeburg, Germany. Currently, he is a postdoctoral fellow funded by the Pacific Institute for the Mathematical Sciences, at Department of Mathematics, Simon Fraser University, Canada. His research interests include algebraic and combinatorial design theory, algebraic coding theory, and finite geometry. In 2018, he received the Kirkman Medal from the Institute of Combinatorics and its Applications.
\end{IEEEbiographynophoto}

\begin{IEEEbiographynophoto}
{Wilfried Meidl} received the Ph.D. degree from Klagenfurt University, Austria, in 1998. From 2000 to 2002 he was with the Institute of Discrete
Mathematics, OEAW, Vienna, Austria. From 2002--2004 he was with Temasek Labs, National University of Singapore, and from 2005--2014 he was with
Sabanc\i University, Istanbul, Turkey. He is now with RICAM, OEAW, Linz, Austria. His research interests include sequences, permutation polynomials,
finite fields and their applications, Boolean functions, bent functions.
\end{IEEEbiographynophoto}

\begin{IEEEbiographynophoto}{Alexandr Polujan}
received the Diploma in mathematics and system analysis from the Belarusian State University, Minsk, Belarus, in 2015.
Since 2016, he has been pursuing the Ph.D. degree under the supervision of Alexander Pott at the Otto von Guericke University, Magdeburg, Germany. His research interests include mainly perfect non-linear functions.
\end{IEEEbiographynophoto}

\begin{IEEEbiographynophoto}{Alexander Pott}
received his Ph.D. degree in Mathematics in 1988 from the Justus-Liebig-University Giessen. He held visiting positions in Dayton (Ohio) and Duisburg (Germany). He was professor at the University of Augsburg (Germany), and presently he is full professor at the Otto-von-Guericke-University Magdeburg, Germany. His research interests include Finite Geometry (difference sets), Algebraic Coding Theory (sequences, Boolean functions) and Finite Fields.
\end{IEEEbiographynophoto}

\begin{IEEEbiographynophoto}{Constanza Riera}
received her Master of Science degree in 2002 from Autonomous University of Madrid, Spain. She completed her Ph.D. in Mathematics at Complutense University of Madrid, Spain in 2006. Currently, she is an Associate Professor at Western Norway University of Applied Sciences in Bergen, Norway. Her research interests are Cryptology, Sequence Design, Coding Theory and Quantum Communications.
\end{IEEEbiographynophoto}

\begin{IEEEbiographynophoto}{Pantelimon St\u anic\u a}
received his Master of Science in Mathematics degree in 1992 from  University of Bucharest, Romania. He completed his Ph.D. in Mathematics at  State University of New York at Buffalo in 1998. Currently, he is a Professor at the Naval Postgraduate School, in Monterey, California. His research interests are in Cryptology, Coding Theory, Sequence Design, Number Theory and Discrete  Mathematics.
\end{IEEEbiographynophoto}

\end{document}